\renewcommand{\maketag@@@}[1]{\hbox{\m@th\normalsize\normalfont#1}}
\newlength{\Oldarrayrulewidth}
\def\BibTeX{{\rm B\kern-.05em{\sc i\kern-.025em b}\kern-.08em
    T\kern-.1667em\lower.7ex\hbox{E}\kern-.125emX}}
\theoremstyle{definition}
\newtheorem{theorem}{\bf Proposition}
\newtheorem{lemma}{\bf Lemma}
\newtheorem{remark}{Remark}
\begin{document}
\title{SA-WiSense: A Blind-Spot-Free Respiration Sensing Framework for Single-Antenna Wi-Fi Devices}
\author{Guangteng Liu, Xiayue Liu, Zhixiang Xu, Yufeng Yuan, Hui Zhao, Yuxuan Liu\\ Yufei Jiang,~\IEEEmembership{Member,~IEEE}
}
\maketitle
\begin{CJK}{UTF8}{gbsn}
        \begin{abstract}
Wi-Fi sensing offers a promising technique for contactless human respiration monitoring. A key challenge, however, is the blind spot problem caused by random phase offsets that corrupt the complementarity of respiratory signals. To address the challenge, we propose a single-antenna-Wi-Fi-sensing (SA-WiSense) framework to improve accuracy of human respiration monitoring, robust against random phase offsets. The proposed SA-WiSense framework is cost-efficient, as only a single antenna is used rather than multiple antennas as in the previous works. Therefore, the proposed framework is applicable to Internet of Thing (IoT), where most of sensors are equipped with a single antenna. On one hand, we propose a cross-subcarrier channel state information (CSI) ratio (CSCR) based blind spot mitigation approach for IoT, where the ratios of two values of CSI between subcarriers are leveraged to mitigate random phase offsets. We prove that the random phase offsets can be cancelled by the proposed CSCR approach, thereby restoring the inherent complementarity of signals for blind-spot-free sensing. On the other hand, we propose a genetic algorithm (GA) based subcarrier selection (GASS) approach by formulating an optimization problem in terms of the sensing-signal-to-noise ratio (SSNR) of CSCR between subcarriers. GA is utilized to solve the formulated optimization problem. We use commodity ESP32 microcontrollers to build an experiment test. The proposed works are validated to achieve an detection rate of 91.2\% for respiration monitoring at distances up to 8.0 meters, substantially more accurate than the state-of-the-art methods with a single antenna.
\end{abstract}

\begin{IEEEkeywords}
Wi-Fi sensing, respiration monitoring, channel state information (CSI), blind spot elimination, frequency diversity, single-antenna system
\end{IEEEkeywords}

\vspace{-3mm}
\section{Introduction}
Respiration is a fundamental vital sign for monitoring disease progression \cite{heng2025exhaled}.
The respiratory rate, in particular, provides diagnostic information comparable to other key physiological parameters such as pulse rate and blood pressure \cite{Wang2024SlpRoF,Zhao2023}.
Abnormal breathing patterns, which may arise from various conditions including asthma, central sleep apnea, and emphysema, can significantly disrupt sleep quality. Such disruptions may lead to excessive daytime sleepiness and an elevated risk of cardiovascular disease \cite{bao2023wi}.
Consequently, the development of continuous and cost-effective respiration monitoring solutions for home-based environments holds substantial clinical significance.

\subsection{Related Work}
Technologies for residential respiration monitoring are categorized into contact-based and contactless approaches.
Contact-based methods, such as wearable sensors~\cite{Guo2024MagWear,Tajin2021}, offer high accuracy and provide reliable data for clinical diagnosis. However, the requirement for physical attachment to the user inherently causes discomfort and inconvenience, limiting the feasibility of such methods for long-term monitoring in a home environment~\cite{Song2025FinerSense,Guo2021Emergency}.
Consequently, contactless alternatives have gained significant attention. For instance, systems achieve remote monitoring using commodity cameras~\cite{Huang2024Camera} or depth cameras~\cite{Ottaviani2022Contactless}. However, the camera-based systems raise significant privacy concerns and rely on expensive, high-resolution equipment~\cite{Zhang2025mmTAA}. 
Other contactless solutions leverage radio frequency (RF) signals. 
While techniques based on Doppler radar have been widely investigated~\cite{Jang2024Remote}, the reliance on specialized and costly hardware renders the approaches impractical for scalable, low-cost applications.

Recently, Wi-Fi-based RF sensing has emerged as a promising non-contact technique that leverages the ubiquity of commodity wireless devices in residential environments.
Such existing infrastructure can be repurposed to detect subtle human motions, enabling applications such as activity recognition~\cite{zhang2022widar3} and respiration detection~\cite{yu2021wisleep}.
The non-invasive nature, privacy-preserving characteristics, and low deployment cost further establish Wi-Fi-based sensing as a compelling technique for long-term, in-home respiration monitoring.

While some systems utilizes received signal strength (RSS)~\cite{Yigitler2020RSS}, 
the fine-grained channel state information (CSI) provides superior sensing resolution, establishing a new state-of-the-art for respiration monitoring~\cite{zeng2018fullbreathe, wang2022research}.
Initial approaches, such as the one presented in~\cite{wang2016human}, achieves high-precision respiratory rate detection using only CSI amplitude. 
However, relying solely on amplitude introduces a fundamental challenge, as revealed by analysis based on Fresnel zone theory~\cite{wang2016human}: the existence of blind spots, where respiratory motion is undetectable.
A conceptual breakthrough in~\cite{zeng2018fullbreathe} established the complementarity between CSI amplitude and phase. The property enables the joint utilization of both components to robustly capture respiratory motion, thereby overcoming the blind spot problem inherent in using either component alone.
However, the raw phase extracted from commodity Wi-Fi devices is unreliable for direct use due to random offsets induced by hardware imperfections, such as clock asynchrony~\cite{li2021complexbeat, Yang2021Respiration}.
To address the challenge, solutions such as FarSense~\cite{zeng2019farsense} construct a CSI ratio that relies on the spatial diversity of multiple-input multiple-output (MIMO) systems.
The resulting CSI ratio not only provides a reliable, information-bearing phase but also restores the essential complementarity.
Building upon the CSI ratio model, DiverSense~\cite{diversense2022} further enhances noise resilience. The method first constructs multiple CSI ratio streams, which are subsequently aligned in the complex plane and summed to suppress zero-mean Gaussian noise.

However, the state-of-the-art methods for blind-spot-free sensing in~\cite{zeng2018fullbreathe,zeng2019farsense,diversense2022} fundamentally rely on spatial diversity provided by multiple antennas (i.e., by leveraging same-frequency subcarriers from different antenna pairs). The multi-antenna requirement poses a significant challenge for the applications in the growing ecosystem of low-cost Internet of Thing (IoT) devices, as the vast majority are equipped with only a single antenna due to cost and form-factor considerations~\cite{schumann2023WiFi}.
Existing explorations into single-antenna Wi-Fi respiration monitoring have made valuable contributions but also exhibit certain limitations. For instance, some solutions utilize only the CSI amplitude for sensing~\cite{khan2023novel, wang2016human}. While effective under particular conditions, the amplitude-only approaches are inherently susceptible to the blind spot problem.
Other studies recover the CSI phase by calibrating phase offsets, such as the sampling frequency offset (SFO) and the central frequency offset (CFO)~\cite{cai2024device}.
The reliance on calibration, however, necessitates object displacements exceeding one wavelength, a condition that renders the approach unsuitable for monitoring sub-wavelength respiration motions~\cite{zeng2018fullbreathe}.
While distributed deployments, such as Wi-CHAR~\cite{hao2024wi}, can mitigate blind spots by expanding the sensing area, the associated need for complex signal selection algorithms and strict device placements limits practical applicability in residential environments.
\subsection{Contributions}

Motivated by the critical need to eliminate blind spots for ubiquitous low-cost, single-antenna IoT devices, our work investigate the use of frequency diversity.
While state-of-the-art techniques, such as the CSI conjugate multiplication~\cite{zeng2018fullbreathe} and the CSI ratio~\cite{zeng2019farsense,diversense2022}, can eliminate blind spots via spatial diversity, the underlying requirement for multiple antennas renders the approaches unsuitable for such IoT devices.
In parallel, existing solutions developed for such single-antenna devices still cannot fully address the blind-spot issue, thereby compromising practical reliability.
Specifically, our work seeks to address the following key questions:
\begin{itemize}
    \item \textbf{The Blind-Spot Problem:} How can a frequency diversity-based sensing scheme for single-antenna devices effectively cancel hardware-induced random phase offsets, thereby restoring the essential complementarity required to eliminate blind spots?
    \item \textbf{The Signal Quality Problem:} How can the sensing-signal-to-noise ratio (SSNR) of the recovered respiratory signal be optimized to ensure the proposed blind-spot-free sensing scheme is robust and reliable for practical, long-range monitoring?
\end{itemize}

The above questions highlight crucial aspects of designing a robust, blind-spot-free scheme for single-antenna devices. To address the questions, we develop single-antenna-Wi-Fi-sensing (SA-WiSense), a complete signal processing framework for real-time, blind-spot-free and high-SSNR respiration estimation. Our main contributions are follows:
\begin{enumerate}
    \item 
    We propose a novel cross-subcarrier CSI ratio (CSCR) for single-antenna devices, which leverages frequency diversity.
    We establish a fundamental proposition demonstrating that single-target, blind-spot-free sensing is achievable using only a single antenna pair.
    The proof of the proposition is predicated on the elimination of hardware-induced random phase offsets, thereby restoring the inherent complementarity of respiratory signals.
    Our work provides the theoretical foundation for blind-spot-free respiration sensing using single-antenna hardware, overcoming the reliance on spatial diversity from MIMO systems in existing methods~\cite{zeng2019farsense,diversense2022}.
    \item We develop a genetic algorithm (GA) based subcarrier selection (GASS) approach to maximize the SSNR of the proposed CSCR. The approach is founded on our key insight that the strategic selection of subcarriers effectively mitigates position-dependent signal degradation, a challenge previously unaddressed in single-antenna sensing. By formulating and solving the selection as an optimization problem, GASS ensures reliable sensing across both low-noise and high-noise conditions.
    \item We implement our proposed SA-WiSense framework and evaluate the performance using CSI data acquired from commodity ESP32 microcontrollers. Extensive experimental results demonstrate that our framework achieves accurate and full-coverage respiration monitoring, effectively overcoming the blind-spot problem inherent in conventional single-antenna approaches~\cite{khan2023novel, wang2016human,cai2024device}.
\end{enumerate}

The rest of this paper is organized as follows. 
Section~\ref{sec:preliminaries} details the system model and explains the underlying cause of blind spots. 
Section~\ref{sec:theorem} introduces and proves the single-antenna blind-spot-free sensing proposition. 
Section~\ref{sec:system_design} describes the techniques for enhancing the SSNR and presents our framework design of SA-WiSense.
Finally, Section~\ref{sec:EXPERIMENTS AND RESULTS} evaluates the performance of our proposed SA-WiSense, and Section~\ref{sec:conclusion} concludes the paper.
        \section{System Model}
\label{sec:preliminaries}
An ideal CSI model is first presented to establish the concept of complementarity for blind-spot-free detection. Subsequently, we analyze the impact of hardware impairments that corrupt the ideal CSI, revealing the key challenges that motivate our work.

\subsection{Ideal System Model}
\label{subsec:ideal_model}

\begin{figure}[t]
    \centering
    \includegraphics[width=0.7\columnwidth]{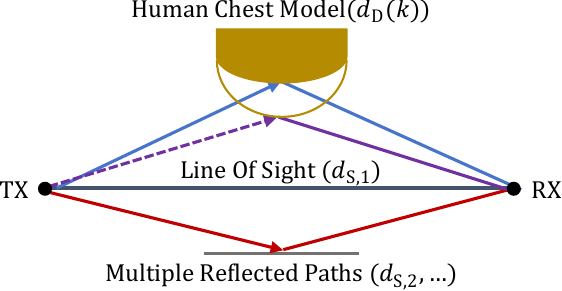}
    \caption{
    The system model for Wi-Fi-based respiration sensing comprises a transmitter (TX), a receiver (RX), and two types of signal paths. The static paths include the line-of-sight (LOS) path and multiple reflected paths, where the length of the $p$-th path is denoted by $d_{\text{S},p}$. The dynamic path has a time-varying length $d_{\text{D}}(k)$ modulated by human respiratory motion at time instant $k$.
    }
    \label{fig:system_model}
\end{figure}

We consider a point-to-point single-input single-output (SISO) Wi-Fi system for single-target respiration sensing, as illustrated in Fig.~\ref{fig:system_model}. The system comprises a single transmitter (TX) and a single receiver (RX), operating under the IEEE 802.11n standard~\cite{wifi802.11} on channel 11 of the 2.4~GHz band with a 40~MHz bandwidth\footnote{The frequency range of the effective subcarriers is from 2433.875~MHz to 2470.125~MHz.}.
The received signal comprises a superposition of signals from multiple propagation paths, consist of static and dynamic types. 
The static paths, including the line-of-sight (LOS) path and reflections from stationary objects, are characterized by fixed path lengths $d_{\text{S},p}$, where $p$ is the path index.
In contrast, the dynamic path is created by the signal reflection from the human chest. The length of the dynamic path, $d_{\text{D}}(k)$, varies over time due to respiratory motion.

The ideal CSI for the $m$-th subcarrier ($m=1,\dots,M$) at the $k$-th time instant ($k=1,\dots,K$) can be modeled as~\cite{wang2016human, zeng2018fullbreathe, zeng2019farsense}:
\begin{equation}
\label{eq:ideal_csi}
\begin{aligned}
H\left( m,k \right) &=H_{\text{S}}\left( m \right) +H_{\text{D}}\left( m,k \right) \\
&=\sum_{p=1}^{L_{\text{S}}}{A_{\text{S},p}\left( m \right) \text{e}^{-\frac{\text{j}2\pi d_{\text{S},p}}{\lambda _m}}}+{A_{\text{D}}\left( m,k \right) \text{e}^{-\frac{\text{j}2\pi d_{\text{D}}\left( k \right)}{\lambda_m}}},
\end{aligned}
\end{equation}
where $M$ is the number of subcarriers\footnote{$M=114$ for High-Throughput Long Training Field (HT-LTF) and $M=104$ for Legacy Long Training Field (L-LTF).}, $K$ is the number of CSI samples.
Fig.~\ref{fig:csi_vector}(a) illustrates the overall CSI given by~\eqref{eq:ideal_csi} in the complex plane. The CSI comprises a static component, $H_{\text{S}}(m)$, and a dynamic component, $H_{\text{D}}(m,k)$.
The static component represents the aggregate effect of all $L_{\text{S}}$ static paths, while the dynamic component results from the signal reflection off the human chest.
$A_{\text{S},p}$ and $A_{\text{D}}(m,k)$ denote the amplitudes of the $p$-th static path and the dynamic path, respectively, and $\lambda_m$ represents the wavelength of the $m$-th subcarrier.
For model reduction, the multiple static paths are aggregated into a single effective path, i.e., $L_{\text{S}}=1$\footnote{The aggregation is justified by the principle of superposition, where the sum of multiple static signal vectors yields a single resultant vector.}.

\subsection{The blind-spot-free Phenomenon}
\label{subsec:blind-spot-free-Phenomenon}
Chest-wall movements induced by respiration cause variations in the dynamic signal path length $d_{\text{D}}(k)$ of approximately 5--12~mm~\cite{wang2016human}.
The resulting path length variations introduce a corresponding change of roughly~$\pi/3$ in the Fresnel phase~$\rho_{\text{fn}}(m,k)$~\cite{wang2016human}, as illustrated in Fig.~\ref{fig:csi_vector}(a). The Fresnel phase is defined as the phase difference between the static and dynamic signal components, i.e., $\rho_{\text{fn}}(m,k) = \angle H_{\text{S}}(m,k) - \angle H_{\text{D}}(m,k)$~\cite{zeng2018fullbreathe}.
Direct measurement of the Fresnel phase, an ideal indicator of respiratory dynamics, is not feasible. Instead, the required variations are inferred from modulations in the accessible CSI amplitude $\left| H(m,k) \right|$ and phase $\theta_{\text{ch}}(m,k)$ for respiration monitoring.

Under ideal noise-free conditions, the CSI amplitude, $\left| H(m,k) \right|$, is given by~\cite{zeng2018fullbreathe}:
\begin{equation}
\label{eq:csi_amplitude}
\left| H(m,k) \right| = \sqrt{
    \begin{multlined}[t]
    \left| H_{\text{S}}(m)\right|^2 + \left| H_{\text{D}}(m,k)\right|^2 \\
    + 2\left| H_{\text{S}}(m) \right|\left| H_{\text{D}}(m,k) \right|\cos \rho _{\text{fn}}\left(m,k\right),
    \end{multlined}
}
\end{equation}
and the CSI phase, $\theta_{\text{ch}}(m,k) = \angle H(m,k)$, is given by:
\begin{equation}
\label{eq:csi_phase_alternative}
\begin{split}
\theta_{\text{ch}}(m,k) = \angle H_{\text{S}}(k) - \frac{\left| H_{\text{D}}(m,k) \right|}{\left| H(m,k) \right|}\sin \rho_{\text{fn}}\left(m,k\right).
\end{split}
\end{equation}
Equations~\eqref{eq:csi_amplitude} and~\eqref{eq:csi_phase_alternative} reveal a cosine dependency for the CSI amplitude and a sine dependency for the CSI phase, both with respect to the Fresnel phase~$\rho_{\text{fn}}(m,k)$.
The orthogonal relationship between the cosine and sine functions establishes a $\pi/2$ phase difference between the ideal CSI amplitude and phase waveforms, a property is known as respiratory signal complementarity~\cite{zeng2018fullbreathe}.

The property of complementarity is fundamental to achieving blind-spot-free sensing.
For instance, when respiratory motion causes the CSI vector to trace the circular arc from~`c' to~`e' in Fig.~\ref{fig:csi_vector}(b), the amplitude $\left| H(m,k) \right|$ exhibits a negligible response. 
A sensing method that relies solely on amplitude would therefore encounter a blind spot.
However, the CSI phase $\theta_{\text{ch}}(m,k)$ experiences a substantial variation over the same trajectory. The inherent quadrature relationship ensures that when the sensitivity of one component to motion is low, the sensitivity of the other is high. Consequently, a joint processing of both CSI amplitude and phase enables robust respiration tracking without blind spots.

\begin{figure}[t]
    \centering
    \begin{subfigure}[b]{0.48\columnwidth}
        \centering
        \includegraphics[width=\linewidth]{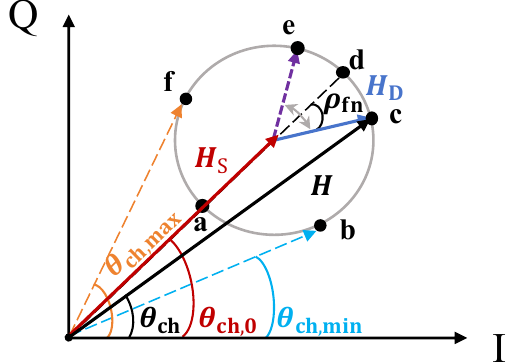}
        \caption{Vector representation of CSI in the complex plane.}
        \label{fig:csi_vector_plane}
    \end{subfigure}
    \hfill
    \begin{subfigure}[b]{0.48\columnwidth}
        \centering
        \includegraphics[width=\linewidth]{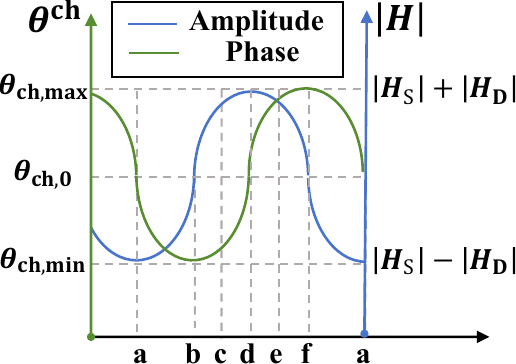}
        \caption{Idealized CSI amplitude and phase waveforms.}
        \label{fig:csi_curves}
    \end{subfigure}
    \caption{Illustration of the ideal CSI model and complementarity. 
    (a)~Vector representation of the ideal CSI in the complex plane, showing the static component $H_{\text{S}}$, the dynamic component $H_{\text{D}}$, and the resultant CSI $H$. Respiratory motion causes $H_{\text{D}}$ to trace a circular arc (e.g., from `c' to `e'). 
    (b)~Idealized CSI amplitude $\left| H \right|$ and phase $\theta_{\text{ch}}$ curves corresponding to the vector movement, illustrating the complementarity through the $\pi/2$ phase difference.}
    \label{fig:csi_vector}
\end{figure}

\subsection{Practical Impairments and the Emergence of Blind Spots}
\label{subsec:practical_model}
In practice, CSI measurements from commodity Wi-Fi devices are corrupted by multiple noise sources. A practical, noisy CSI measurement, $\tilde{H}(m,k)$, is commonly modeled as~\cite{li2021complexbeat}:
\begin{equation}
\label{eq:practical_csi}
\tilde{H}\left( m,k \right) \begin{matrix} =A_{\text{n}}\left( m,k \right) \text{e}^{-\text{j}\theta_{\text{os}}\left( m,k \right)}H\left( m,k \right) +\varepsilon \left( m,k \right)\\ \end{matrix}
\end{equation}
where $A_{\text{n}}(m,k)$ denotes the amplitude impulse noise~\cite{Wu2023WiTraj}, $\theta_{\text{os}}(m,k)$ represents the random phase offset~\cite{li2021complexbeat, Yang2021Respiration}, and $\varepsilon(m,k)$ is the zeros-mean Gaussian noise~\cite{Chen2025WiPhase}.

The key challenge arises from the random phase offsets, $\theta_{\text{os}}(m,k)$.
Unlike the high-frequency amplitude impulse noise and Gaussian noise that can be suppressed via filtering, random phase offsets introduce time-varying and unpredictable distortions to the CSI phase~\cite{Chen2025WiPhase}. 
Such unpredictable variations render the raw phase information unsuitable for direct use in respiration detection.
As a result, sensing methods typically discard the phase component and operate using only the amplitude of $\tilde{H}(m,k)$~\cite{khan2023novel, wang2016human}. Reliance on the amplitude alone, however, is the direct cause of the blind-spot problem, as detailed in Section~\ref{subsec:blind-spot-free-Phenomenon}.

The hardware-induced impairments inherent in commodity Wi-Fi devices, particularly the random phase offsets, amplitude impulse noise and Gaussian noise, lead to two fundamental and interconnected challenges for single-antenna respiration sensing: the blind-spot problem and the signal quality problem. The former arises from the corruption of phase information, making amplitude-only sensing unreliable, while the latter affects the overall robustness of the measurement. In the following sections, we first establish the theoretical foundation for blind-spot-free sensing in Section~\ref{sec:theorem}, and then detail the framework design to enhance signal quality in Section~\ref{sec:system_design}.

        \section{CSCR for Single-Antenna Blind-Spot-Free Sensing}
\label{sec:theorem}
To enable blind-spot-free respiration sensing on single-antenna devices, we propose a novel ratio-based formulation, termed the CSCR, defined as:
\begin{equation}
\label{eq:csi_ratio_model}
\begin{aligned}
\MoveEqLeft \mathcal{H}(m_1,m_2,k) = \frac{\tilde{H}(m_1,k)}{\tilde{H}(m_2,k)} \\&= \frac{A_{\text{n}}\left( m_1,k \right) \text{e}^{-\text{j}\theta _{\text{os}}\left( m_1,k \right)}H\left( m_1,k \right) +\varepsilon \left( m_1,k \right)}{A_{\text{n}}\left( m_2,k \right) \text{e}^{-\text{j}\theta _{\text{os}}\left( m_2,k \right)}H\left( m_2,k \right) +\varepsilon \left( m_2,k \right)}.
\end{aligned}
\end{equation}
where $m_1$ and $m_2$ are two distinct subcarriers ($m_1 \neq m_2$). The proposed CSCR leverages frequency diversity, distinguishing the formulation from conventional CSI ratios~\cite{zeng2019farsense,diversense2022} that depend on spatial diversity from multi-antenna systems.

The proposed CSCR performs respiration monitoring by utilizing both the amplitude $\left| \mathcal{H}(m_1,m_2,k) \right|$ and phase $\vartheta(m_1,m_2,k)$.
To ensure the reliability of the two components, the proposed CSCR formulation is designed to counteract hardware-induced corruptions. 
The formulation first cancels the random phase offsets $\theta_{\text{os}}(m,k)$, a property demonstrated in \textbf{Lemma}~\ref{lem:phase_elimination}.
The ratio also eliminates the amplitude impulse noise $A_{\text{n}}(m,k)$, an operation detailed in \textbf{Lemma}~\ref{lem:amplitude_elimination}. Removal of such amplitude impulse noise enhances overall signal quality for robust sensing.
By nullifying both the phase offsets and the amplitude noise, the CSCR formulation successfully restores the inherent complementarity of the respiratory signal.
The restored complementarity is formally established in \textbf{Proposition}~\ref{thm:blind_spot_free} and is the key to achieving blind-spot-free performance.

\begin{lemma}[Elimination of Random Phase Offsets]
\label{lem:phase_elimination}
The proposed CSCR eliminates random phase offsets and recovers a reliable, information-bearing phase.
\end{lemma}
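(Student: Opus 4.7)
The goal is to show that forming the CSCR on a single antenna cancels the random phase offset $\theta_{\text{os}}(m,k)$ from \eqref{eq:practical_csi}, so that the phase $\vartheta(m_1,m_2,k)=\angle \mathcal{H}(m_1,m_2,k)$ depends only on the physical channel $H(\cdot,k)$. I would organize the argument in three steps: (i) a modeling step that characterizes $\theta_{\text{os}}(m,k)$ across subcarriers sharing a common RF front-end; (ii) a simplification step that discards the additive Gaussian term $\varepsilon(m,k)$ under the respiration-band SNR regime; and (iii) an algebraic cancellation step that reduces the ratio to a pure channel quantity.

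\textbf{Step (i): Common-mode phase-offset model.} The first move is to expand $\theta_{\text{os}}(m,k)$ into its physical constituents on the same receive chain, namely the CFO-induced rotation, the PLL random initial phase, and the SFO/PDD term that is linear in subcarrier index. Because $m_1$ and $m_2$ are processed by one oscillator and one sampling clock at the same packet, the CFO and PLL contributions are identical at $m_1$ and $m_2$. The residual SFO/PDD term scales with $(m_1-m_2)\Delta f \tau$, where $\Delta f$ is the subcarrier spacing and $\tau$ is the synchronization residual; I would invoke the standard assumption used in the references \cite{li2021complexbeat,Yang2021Respiration} that, within the narrow respiration passband and for nearby subcarrier pairs, this residual is a quasi-static nuisance that does not vary with $k$ on the respiration timescale. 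Absorbing it into the static channel $H_{\text{S}}$ yields $\theta_{\text{os}}(m_1,k)\equiv\theta_{\text{os}}(m_2,k)$ on the dynamic band.

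\textbf{Step (ii): Noise suppression.} Next I would note that $\varepsilon(m,k)$ is zero-mean Gaussian with variance far below $|A_{\text{n}}(m,k)H(m,k)|^{2}$ for a typical Wi-Fi link, so dividing both numerator and denominator of \eqref{eq:csi_ratio_model} by the respective signal terms introduces an additive perturbation that vanishes in the first-order expansion. This justifies replacing \eqref{eq:csi_ratio_model} with
\begin{equation}
\mathcal{H}(m_1,m_2,k)\;\approx\;\frac{A_{\text{n}}(m_1,k)\,\text{e}^{-\text{j}\theta_{\text{os}}(m_1,k)}H(m_1,k)}{A_{\text{n}}(m_2,k)\,\text{e}^{-\text{j}\theta_{\text{os}}(m_2,k)}H(m_2,k)}.
\end{equation}

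\textbf{Step (iii): Cancellation and interpretation.} Substituting the identity from Step (i) collapses the exponential factor to unity, leaving
\begin{equation}
\mathcal{H}(m_1,m_2,k)\;\approx\;\frac{A_{\text{n}}(m_1,k)}{A_{\text{n}}(m_2,k)}\cdot\frac{H(m_1,k)}{H(m_2,k)},
\end{equation}
so that $\vartheta(m_1,m_2,k)=\angle H(m_1,k)-\angle H(m_2,k)$, a quantity driven entirely by the deterministic channel and hence by the respiratory modulation of $d_{\text{D}}(k)$ through \eqref{eq:ideal_csi}. That shows the recovered phase is information-bearing.

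\textbf{Main obstacle.} The delicate point is Step (i): $\theta_{\text{os}}(m,k)$ is not literally subcarrier-independent because SFO and PDD contribute a term linear in $m$. The success of the lemma hinges on arguing that, after the ratio is taken, this linear-in-$m$ residual reduces to a time-invariant offset (or a slow drift well outside the 0.1--0.5\,Hz respiration band) rather than a time-varying corruption on the scale of $\theta_{\text{os}}$ itself. I would therefore spend the bulk of the proof carefully decomposing $\theta_{\text{os}}(m,k)$ into its packet-to-packet random part (common to all subcarriers on one chain) and its subcarrier-dependent but slowly varying part, and show that only the former appears in $\tilde H$ with significant temporal variance, so the difference $\theta_{\text{os}}(m_1,k)-\theta_{\text{os}}(m_2,k)$ is effectively constant in $k$ and absorbed into the static baseline of $\vartheta(m_1,m_2,k)$.
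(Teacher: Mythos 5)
Your overall strategy matches the paper's in outline: decompose $\theta_{\text{os}}(m,k)$ into a subcarrier-independent part (CFO/initial phase) that cancels exactly in the ratio and a linear-in-index part whose difference $(n(m_1)-n(m_2))\,\eta$ reduces to a constant offset that is harmless for respiration extraction. However, there is a genuine gap in your Step~(i). The paper's model (following \cite{Chen2025WiPhase}) writes the offset as $\theta_{\text{os}}(m,k)=n(m)\cdot(\eta_{\text{b}}(k)+\eta_{\text{o}})+\varphi(k)$, where $\eta_{\text{b}}(k)$ is the packet-boundary-detection jitter: it is \emph{both} subcarrier-dependent (scaled by the physical index $n(m)$) \emph{and} randomly varying from packet to packet. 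This term falsifies the dichotomy you rely on in your ``Main obstacle'' paragraph --- namely that the packet-to-packet random part is common-mode across subcarriers while the subcarrier-dependent part is quasi-static. After taking the ratio, the residual $(n(m_1)-n(m_2))\,\eta_{\text{b}}(k)$ survives and fluctuates on every packet, which is fatal because the respiration-induced phase variation you are trying to recover is itself only a fraction of a radian.

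The paper closes this hole with an explicit extra step you are missing: before forming the CSCR, the measured phase $\theta_{\text{RX}}(m,k)$ is averaged over $K_1$ consecutive samples, and since $\eta_{\text{b}}(k)$ is zero-mean Gaussian, the weak law of large numbers drives its contribution to zero, leaving $\hat\theta_{\text{RX}}(m,k)=\theta_{\text{ch}}(m,k)+n(m)\eta_{\text{o}}+\varphi(k)$. Only then does the cross-subcarrier difference yield $\theta_{\text{ch}}(m_1,k)-\theta_{\text{ch}}(m_2,k)+(n(m_1)-n(m_2))\eta_{\text{o}}$ with the SFO $\eta_{\text{o}}$ treated as constant over minutes. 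Also note a smaller discrepancy: you conclude $\theta_{\text{os}}(m_1,k)\equiv\theta_{\text{os}}(m_2,k)$ so that the offset vanishes entirely, whereas the paper retains a nonzero but time-constant residual $(n(m_1)-n(m_2))\eta_{\text{o}}$ and argues it is irrelevant only because it does not affect the differential phase used for sensing. Your Steps~(ii) and (iii) are consistent with the paper's treatment, but the proof as written does not establish the lemma without the temporal averaging of the PBD jitter.
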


\begin{proof}
According to the analysis in~\cite{Chen2025WiPhase}, the measured phase of the $m$-th CSI sample, denoted as $\theta_{\text{RX}}(m,k)= \angle \tilde{H}(m,k)$, can be expressed as:
\begin{equation}
\begin{split}
\theta_{\text{RX}}(m,k) &= \theta_{\text{ch}}(m,k) + \theta_{\text{os}}(m,k) \\
          &= \theta_{\text{ch}}(m,k) + n(m)\cdot(\eta_{\text{b}}(k) + \eta_{\text{o}}) + \varphi(k),
\end{split}
\end{equation}
where $\eta_{\text{b}}(k)$, $\eta_{\text{o}}$, and $\varphi(k)$ denote the phase errors originating from packet boundary detection (PBD), SFO, and CFO, respectively, $n(m)$ denotes the physical subcarrier index.
Although $\theta_{\text{ch}}(m,k)$ carries the desired respiratory information, the random phase offset $\theta_{\text{os}}(m,k)$ hinders a direct extraction of the respiratory signal from the composite phase $\theta_{\text{RX}}(m,k)$.
A key property of $\theta_{\text{os}}(m,k)$ is the linear relationship with the physical subcarrier index $n(m)$, which enables the elimination of the offset by computing a ratio between different subcarriers\footnote{Note that the physical index $n(m)$ is distinct from the subcarrier array index $m$For instance, in the IEEE 802.11n HT-LTF field where $n(m) \in [-58, -2] \cup [2, 58]$, the array indices $m \in \{1, 57, 58, 114\}$ correspond to physical indices $n(m=1) = -58, n(m=57) = -2, n(m=58) = 2$, and $n(m=114) = 58$.}. 

The phase noise component $\eta_{\text{b}}(k)$ follows a zero-mean Gaussian distribution~\cite{Chen2025WiPhase}. Therefore, we apply the weak law of large numbers to mitigate the time-varying PBD-induced errors by averaging the measured phase $\theta_{\text{RX}}(m,k)$ over $K_1$ samples. The resulting averaged phase, $\hat\theta_{\text{RX}}(m,k)$, is represented as:
\begin{align}
\MoveEqLeft \hat\theta_{\text{RX}}(m,k) = \sum_{i=1}^{{K_1}} \frac{\theta_{\text{RX}}(m,(k-1){K_1}+i)}{K_1} \\&= \theta_{\text{ch}}(m,k) + n(m)\cdot\eta_{\text{o}} + \varphi(k).
\end{align}
Subsequently, by constructing the proposed CSCR with subcarriers $m_1$ and $m_2$, the CSCR phase, $\vartheta(m_1,m_2,k) = \angle \mathcal{H}(m_1,m_2,k)$, is derived as:
\begin{equation}
\begin{split}
\vartheta&(m_1,m_2,k) 
= \hat{\theta}_{\text{RX}}(m_1,k) - \hat{\theta}_{\text{RX}}(m_2,k) \\ 
&= \theta_{\text{ch}}(m_1,k) - \theta_{\text{ch}}(m_2,k) + (n(m_1) - n(m_2))\eta_{\text{o}}.
\end{split}
\end{equation}
The SFO, $\eta_{\text{o}}$, can be regarded as a constant during the CSI processing period, as the SFO remains stable over several minutes~\cite{Chen2025WiPhase}.
The desired respiratory information is embedded in the differential phase $\Delta\vartheta(m_1,m_2,k) = \Delta\theta_{\text{ch}}(m_1,k) - \Delta\theta_{\text{ch}}(m_2,k)$. 
Although the term $(n(m_1) - n(m_2))\eta_{\text{o}}$ introduces a constant phase offset that varies for different subcarrier pairs, the specific offset does not affect the final respiration measurements.
\end{proof}

\begin{lemma}[Elimination of Amplitude Impulse Noise]
\label{lem:amplitude_elimination}
The proposed CSCR eliminates amplitude impulse noise that exhibits high correlation across different subcarriers.
\end{lemma}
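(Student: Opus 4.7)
The plan is to exploit the multiplicative structure of the noise model in equation~\eqref{eq:practical_csi} so that a common amplitude perturbation on the two subcarriers cancels algebraically in the CSCR defined by equation~\eqref{eq:csi_ratio_model}. First, I would set aside the additive Gaussian term $\varepsilon(m,k)$, which lies outside the scope of the present lemma and is suppressed separately by low-pass filtering later in the pipeline. Under this focus, each raw CSI measurement reduces to the clean multiplicative form $\tilde{H}(m,k)\approx A_{\text{n}}(m,k)\,\text{e}^{-\text{j}\theta_{\text{os}}(m,k)}H(m,k)$.

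Second, I would formalize the phrase ``high correlation across different subcarriers'' in the lemma's hypothesis. The physical rationale is that amplitude impulse noise on commodity Wi-Fi hardware originates from short, broadband time-domain disturbances, such as AGC transients or external electromagnetic interference, whose duration is much shorter than one OFDM symbol. After the receiver FFT, such events spread almost uniformly across all subcarriers of the same symbol. I would encode this property by writing $A_{\text{n}}(m_1,k) = A_{\text{n}}(m_2,k)\bigl(1 + \delta(m_1,m_2,k)\bigr)$ with $|\delta|\ll 1$, which corresponds to a cross-subcarrier correlation coefficient close to one.

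Third, substituting this factorization into equation~\eqref{eq:csi_ratio_model} (with the additive noise suppressed) causes $A_{\text{n}}(m_2,k)$ to cancel between numerator and denominator to first order, leaving
\begin{equation}
\mathcal{H}(m_1,m_2,k)\approx \frac{\text{e}^{-\text{j}\theta_{\text{os}}(m_1,k)}\,H(m_1,k)}{\text{e}^{-\text{j}\theta_{\text{os}}(m_2,k)}\,H(m_2,k)},
\end{equation}
which no longer depends on the impulse-noise amplitude. Together with Lemma~\ref{lem:phase_elimination}, this yields the noise-free multiplicative form needed to restore complementarity in the subsequent proposition.

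The main obstacle will be controlling the residual $\delta(m_1,m_2,k)$ rather than simply discarding it, since a small denominator combined with a nonzero $\delta$ could in principle amplify leakage. I would address this with a first-order Taylor expansion of the ratio in $\delta$ that makes the leakage magnitude explicit and shows it scales with the cross-subcarrier amplitude correlation error, which is small under the lemma's hypothesis. I would also note that any residual leakage is further attenuated downstream by the subcarrier selection in GASS introduced in Section~\ref{sec:system_design}, which preferentially chooses pairs whose amplitude responses are tightly coupled in frequency.
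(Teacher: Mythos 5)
Your proposal is correct and follows essentially the same route as the paper's proof: both arguments rest on the assertion that the impulse noise is highly correlated across subcarriers (the paper writes $A_{\text{n}}(m_1,k) \approx A_{\text{n}}(m_2,k)$, attributing it to NIC-internal state transitions per its cited reference) so that the common multiplicative factor cancels between numerator and denominator of the CSCR. Your explicit $\delta$-parametrization and first-order control of the residual leakage is a modest tightening of the paper's informal ``$\approx$'' cancellation, not a different argument.
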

\begin{proof}
As detailed in~\cite{Wu2023WiTraj}, amplitude impulse noise primarily originates from internal state transitions within the Wi-Fi network interface card, such as adjustments to transmission power or rate adaptation. A key characteristic of the impulse noise is the high correlation across all CSI streams, meaning the noise simultaneously corrupts the samples on all subcarriers.  
The high correlation, where $A_{\text{n}}(m_1,k) \approx A_{\text{n}}(m_2,k)$ for $m_1 \neq m_2$, ensures that the correlated noise term $A_{\text{n}}(m,k)$ is present in both the numerator and the denominator during the ratio operation.
Consequently, performing the proposed CSCR operation~\eqref{eq:csi_ratio_model} cancels out the common noise term, which effectively suppresses the amplitude impulse noise.
\end{proof}

\begin{remark}
With the aid of \textbf{Lemma}~\ref{lem:phase_elimination} and \textbf{Lemma}~\ref{lem:amplitude_elimination}, the proposed CSCR expression in~\eqref{eq:csi_ratio_model} reduces to~\eqref{eq:simplified_ratio}:
\begin{equation}
\label{eq:simplified_ratio}
    \mathcal{H}(m_1,m_2,k)
    = \frac{H\left(m_1,k\right)+\dot{\varepsilon}\left( m_1,k \right)}{H\left(m_2,k\right)+\dot{\varepsilon}\left( m_2,k \right)}\text{e}^{-\text{j}\Delta\theta(m_1,m_2)},
\end{equation}
where the term $\dot{\varepsilon}(m_1,k)$ and $\dot{\varepsilon}(m_2,k)$ represents the residual Gaussian noise after being scaled by the impulse noise and rotated by the phase offset. The distribution of $\dot{\varepsilon}(m_1,k)$ and $\dot{\varepsilon}(m_2,k)$ is still a zero-mean Gaussian~\cite{li2021complexbeat}. According to \textbf{Lemma}~\ref{lem:phase_elimination}, the random phase offset is transformed into a time-constant relative phase offset, denoted as $\Delta\theta(m_1,m_2) = (n(m_1) - n(m_2))\eta_{\text{o}}$. 

With the hardware-induced noise effectively eliminated, the remaining challenge is to prove that our proposed CSCR~\eqref{eq:simplified_ratio} restores the essential complementarity.
\end{remark}

\begin{theorem}[Single-Antenna blind-spot-free Sensing]
\label{thm:blind_spot_free}
In a single-target sensing scenario, the proposed CSCR~\eqref{eq:simplified_ratio} enables blind-spot-free sensing by restoring the complementary between the amplitude and phase. The CSCR amplitude $\left| \mathcal{H}(m_1,m_2,k) \right|$ and phase $\vartheta(m_1,m_2,k)$ are given by:
\begin{equation}
\left| \mathcal{H}(m_1,m_2,k) \right| = 
\sqrt{%
    \begin{multlined}[t]
    \left| \mathcal{H}_{\text{S}}(m_1,m_2) \right|^2 + \left| \mathcal{H}_{\text{D}}(m_1,m_2,k) \right|^2 \\
    + 2\left| \mathcal{H}_{\text{S}}(m_1,m_2) \right|\left| \mathcal{H}_{\text{D}}(m_1,m_2,k) \right|\\
    \cos \rho_{\text{ratio}}(m_1,m_2,k),
    \end{multlined}
}
\label{eq:mag_full}
\end{equation}
\begin{equation}
    \vartheta(m_1,m_2,k) \approx 
    \begin{multlined}[t]
    \angle \mathcal{H}_{\text{S}}(m_1,m_2)-\\
    \frac{\left| \mathcal{H}_{\text{D}}(m_1,m_2,k) \right|\sin \rho_{\text{ratio}}(m_1,m_2,k)}{\left| \mathcal{H}(m_1,m_2,k) \right|}.
    \label{eq:phase_full}
    \end{multlined}
\end{equation}
where $\mathcal{H}_{\text{S}}(m_1,m_2)$ and $\mathcal{H}_{\text{D}}(m_1,m_2,k)$ are the static and dynamic components of the CSCR, and $\rho_{\text{ratio}}(m_1,m_2,k) = \angle \mathcal{H}_{\text{S}} - \angle \mathcal{H}_{\text{D}}$ is the Fresnel phase of the CSCR. 
The static and dynamic components are formulated for different noise levels as follows:
For low-noise conditions, $\mathcal{H}_{\text{S}}(m_1,m_2) = \frac{\mathcal{A}}{\mathcal{C}}$ and $\mathcal{H}_{\text{D}}(m_1,m_2,k) = \frac{\mathcal{B}\mathcal{C}-\mathcal{A}\mathcal{D}}{\mathcal{C}^2} \cdot \frac{1}{\mathcal{Z}+\frac{\mathcal{D}}{\mathcal{C}}}$. 
For high-noise conditions, $\mathcal{H}_{\text{S}}(m_1,m_2) = \frac{\mathcal{B}}{\mathcal{D}}$ and $\mathcal{H}_{\text{D}}(m_1,m_2,k) = \frac{\mathcal{A}}{\mathcal{D}} \cdot \mathcal{Z}$.
The terms $\mathcal{A}$, $\mathcal{B}$, $\mathcal{C}$, $\mathcal{D}$, and $\mathcal{Z}$ are defined in~\eqref{eq:def_A}-\eqref{eq:def_Z}.
\end{theorem}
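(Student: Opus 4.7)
My proof plan starts from the simplified CSCR~\eqref{eq:simplified_ratio} and aims to reproduce the decomposition $\mathcal{H}=\mathcal{H}_{\text{S}}+\mathcal{H}_{\text{D}}$ in analogy with the original CSI model~\eqref{eq:ideal_csi}. I would first expand the numerator and denominator as the sum of three pieces: the aggregated static vector $H_{\text{S}}(m_i)$, the respiration-modulated dynamic vector $H_{\text{D}}(m_i,k)$, and the residual zero-mean Gaussian term $\dot{\varepsilon}(m_i,k)$. I would then identify which of these is dominant in each regime so that the remaining terms can be treated as a small perturbation; this bookkeeping is what produces the labels $\mathcal{A}$, $\mathcal{B}$, $\mathcal{C}$, $\mathcal{D}$ together with the rotation factor $\mathcal{Z}$ referenced in~\eqref{eq:def_A}--\eqref{eq:def_Z}.

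The next step is to linearize the ratio. In the low-noise regime the static signal $H_{\text{S}}(m_2)$ dominates the denominator, so I would factor $\mathcal{C}$ out and apply $1/(1+x)\approx 1-x$ to the remainder, group the $k$-independent contribution into $\mathcal{H}_{\text{S}}=\mathcal{A}/\mathcal{C}$, and collect all respiration-dependent terms into $\mathcal{H}_{\text{D}}=(\mathcal{B}\mathcal{C}-\mathcal{A}\mathcal{D})/\mathcal{C}^{2}\cdot 1/(\mathcal{Z}+\mathcal{D}/\mathcal{C})$. In the high-noise regime the Gaussian term dominates and the roles of signal and noise in the denominator swap; the symmetric manipulation yields $\mathcal{H}_{\text{S}}=\mathcal{B}/\mathcal{D}$ and $\mathcal{H}_{\text{D}}=(\mathcal{A}/\mathcal{D})\,\mathcal{Z}$. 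In both cases the CSCR is reduced to a dominant static phasor plus a small dynamic phasor whose relative angle is, by construction, the CSCR Fresnel phase $\rho_{\text{ratio}}$.

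Given this decomposition, equations~\eqref{eq:mag_full} and~\eqref{eq:phase_full} follow from elementary complex-vector geometry. The magnitude formula is the law of cosines applied to the triangle spanned by $\mathcal{H}_{\text{S}}$ and $\mathcal{H}_{\text{D}}$ with included angle $\rho_{\text{ratio}}$. For the phase, I would write $\mathcal{H}=e^{j\angle \mathcal{H}_{\text{S}}}\bigl(|\mathcal{H}_{\text{S}}|+|\mathcal{H}_{\text{D}}|e^{-j\rho_{\text{ratio}}}\bigr)$, take $\arctan$ of the ratio of imaginary to real parts, and invoke $\arctan(x)\approx x$ together with the identity $(|\mathcal{H}_{\text{S}}|+|\mathcal{H}_{\text{D}}|\cos\rho_{\text{ratio}})^{2}=|\mathcal{H}|^{2}-|\mathcal{H}_{\text{D}}|^{2}\sin^{2}\rho_{\text{ratio}}$ to substitute $|\mathcal{H}|$ for the real-part term. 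The resulting perturbation $-|\mathcal{H}_{\text{D}}|\sin\rho_{\text{ratio}}/|\mathcal{H}|$ is exactly~\eqref{eq:phase_full}. The $\cos\rho_{\text{ratio}}$ modulation in the amplitude and the $\sin\rho_{\text{ratio}}$ modulation in the phase are $\pi/2$ out of phase, so wherever the amplitude is insensitive to respiratory motion the phase is maximally sensitive, which is the blind-spot-free property.

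The step I expect to be the main obstacle is the high-noise case, where the natural ``signal plus small perturbation'' grouping is inverted: the Gaussian residual must be treated as the backbone of the denominator and the chest-reflection term as the perturbation, yet the proposition still demands the canonical $\mathcal{H}_{\text{S}}+\mathcal{H}_{\text{D}}$ form with the same cosine/sine complementarity. Verifying that the rotation factor $\mathcal{Z}$ lands in $\mathcal{H}_{\text{D}}$ rather than being absorbed into $\mathcal{H}_{\text{S}}$, and that the resulting dynamic term still tracks $\rho_{\text{ratio}}$ linearly rather than being scrambled by the random noise phase, is the delicate part of the argument; apart from this case analysis, the derivation reduces to a Taylor expansion followed by trigonometric identities on the resulting complex sum.
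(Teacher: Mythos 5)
Your overall skeleton---write the CSCR as a dominant static phasor plus a small rotating dynamic phasor, then get \eqref{eq:mag_full} from the law of cosines and \eqref{eq:phase_full} from $\arctan(x)\approx x$---matches the paper, and your amplitude/phase geometry at the end is correct. However, there are three concrete gaps in how you propose to reach the stated decomposition. First, before any of the labels $\mathcal{A},\dots,\mathcal{Z}$ make sense, the numerator and denominator must be made to share a \emph{single} rotating phasor $\mathcal{Z}=\mathrm{e}^{-\mathrm{j}2\pi(d_0+\varDelta d(k))/\lambda_{m_1}}$ even though the two subcarriers have different wavelengths; the paper does this by writing $1/\lambda_{m_2}$ in terms of $1/\lambda_{m_1}$ and arguing that the residual factor $\mathrm{e}^{\mathrm{j}2\pi\varDelta d(k)(\lambda_{m_2}-\lambda_{m_1})/(\lambda_{m_1}\lambda_{m_2})}\approx 1$ because the respiratory displacement is sub-wavelength and $(\lambda_{m_2}-\lambda_{m_1})/\lambda_{m_2}\ll 1$. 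This is the step that makes the cross-subcarrier ratio behave like the cross-antenna ratio, and your proposal skips it. Second, the low-noise decomposition in the proposition is \emph{not} obtained by factoring out the dominant static term and applying $1/(1+x)\approx 1-x$: that linearization yields a static offset $\mathcal{B}/\mathcal{D}$ plus a term linear in $\mathcal{Z}$ (and, at the order you keep, a spurious $\mathcal{Z}^2$ term), not the stated $\mathcal{H}_{\text{S}}=\mathcal{A}/\mathcal{C}$ and $\mathcal{H}_{\text{D}}=\frac{\mathcal{B}\mathcal{C}-\mathcal{A}\mathcal{D}}{\mathcal{C}^2}\cdot\frac{1}{\mathcal{Z}+\mathcal{D}/\mathcal{C}}$. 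The paper instead uses the exact Moebius (partial-fraction) identity $\frac{\mathcal{A}\mathcal{Z}+\mathcal{B}}{\mathcal{C}\mathcal{Z}+\mathcal{D}}=\frac{\mathcal{A}}{\mathcal{C}}+\frac{\mathcal{B}\mathcal{C}-\mathcal{A}\mathcal{D}}{\mathcal{C}^2}\cdot\frac{1}{\mathcal{Z}+\mathcal{D}/\mathcal{C}}$, and the circle-to-circle property of the Moebius map is what guarantees $\mathcal{H}_{\text{D}}$ traces a circular arc; no smallness assumption on the dynamic term relative to the static one is needed there.

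Third, your reading of the high-noise regime would break the proof rather than complete it. The paper's assumption is $A_{\text{D}}(m,k)\not\gg\dot{\varepsilon}(m,k)$ but still $A_{\text{S}}(m,k)\gg\dot{\varepsilon}(m,k)$: the \emph{static} path continues to dominate the noise, so the denominator collapses to $H_{\text{S}}(m_2)\mathrm{e}^{\mathrm{j}\Delta\theta(m_1,m_2)}$ alone, and the zero-mean Gaussian residual in the numerator is removed by a moving average over $K_2$ samples, leaving the linear form $(\mathcal{A}\mathcal{Z}+\mathcal{B})/\mathcal{D}$. If, as you propose, the Gaussian residual were ``the backbone of the denominator,'' the ratio's phase would be scrambled by the random noise phase and no complementarity could survive; the delicate point you flag (whether $\mathcal{Z}$ lands in $\mathcal{H}_{\text{D}}$) is automatic once the denominator is noise-free and constant, but the averaging step that makes it so is missing from your plan.
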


\begin{proof}
The detailed proof is provided in Appendix~\ref{appendix:proof_theorem1}.
\end{proof}

The validity of \textbf{Proposition}~\ref{thm:blind_spot_free} stems from the behavior of the dynamic component $\mathcal{H}_{\text{D}}$. The component $\mathcal{H}_{\text{D}}$ traces a circular arc in the complex plane, a trajectory analogous to that of the dynamic component in ideal CSI, as illustrated in Fig.~\ref{fig:csi_vector}. Consequently, the amplitude and phase relationships in~\eqref{eq:mag_full} and~\eqref{eq:phase_full} become structurally identical to those of the ideal case in~\eqref{eq:csi_amplitude} and~\eqref{eq:csi_phase_alternative}. Restoring such structural identity is crucial for reviving the inherent complementarity between amplitude and phase, thereby eliminating blind spots.
        \section{SA-WiSense Framework Design}
\label{sec:system_design}
Building upon the theoretical feasibility of blind-spot-free sensing established in Section~\ref{sec:theorem}, we now detail the design of SA-WiSense, an algorithm for recovering a high-SSNR respiratory signal. We first formulate a principle for maximizing the SSNR, as detailed in Section~\ref{sec:SSNR_design}. We then present the complete architecture of SA-WiSense algorithm in Section~\ref{sec:system_structure}.

\subsection{Design Principles for SSNR Enhancement}
\label{sec:SSNR_design}
We employ the SSNR to quantify signal quality, which is theoretically defined as $\frac{\left|\mathcal{H}_{\text{D}}(m_1,m_2,k)\right|^2}{\left|\dot{\varepsilon} \left( m_1,k \right)\right|^2}$~\cite{diversense2022}. However, a direct estimation of the noise variance $\left|\dot{\varepsilon} \left( m_1,k \right)\right|^2$ is practically difficult. Therefore, we approximate the SSNR in the frequency domain. Specifically, we first obtain the frequency spectrum of the signal via a fast Fourier transform (FFT). 
The SSNR is then calculated as the ratio of the energy within the respiratory frequency band of 0.167--0.5~Hz to the energy at frequencies above 0.5~Hz. We denote the practical SSNR calculation procedure as the function $R(\cdot)$.

A systematic optimization of the SSNR is contingent upon maximizing the amplitude of the respiratory dynamic component $|\mathcal{H}_{\text{D}}(m_1,m_2,k)|$ while suppressing background noise. 
The enhancement of the desired signal relies on a fundamental principle, optimal signal source selection, which is formally stated in the following lemma.

\newtheorem{lemma3}{Lemma3}
\begin{lemma}[Maximizing Signal Component via Subcarrier Pair Selection]
\label{lem:subcarrier_selection}
The amplitude of the respiratory dynamic component, $|\mathcal{H}_{\text{D}}(m_1,m_2,k)|$, can be maximized by optimally selecting the subcarrier pair $(m_1, m_2)$. The selection is effective in low-noise scenarios for mitigating position-dependent signal degradation and remains beneficial for selecting an optimal signal source in high-noise conditions.
\end{lemma}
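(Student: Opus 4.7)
The plan is to treat $|\mathcal{H}_{\text{D}}(m_1,m_2,k)|$ as an explicit, deterministic function of the subcarrier pair $(m_1,m_2)$ and then to expose how this function varies over the feasible index set, so that maximisation becomes a well-posed, non-trivial optimisation. First I would substitute the closed forms of $\mathcal{A},\mathcal{B},\mathcal{C},\mathcal{D},\mathcal{Z}$ used in \textbf{Proposition}~\ref{thm:blind_spot_free} into both the low-noise and the high-noise branches, thereby making the dependence on the per-subcarrier wavelengths $\lambda_{m_1},\lambda_{m_2}$, the static path length $d_{\text{S}}$, and the dynamic path length $d_{\text{D}}(k)$ fully explicit. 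Since the feasible set $\{(m_1,m_2):m_1\neq m_2,\ 1\le m_1,m_2\le M\}$ is finite, the existence of a maximiser is immediate; the real content of the lemma is that the maximum is strictly larger than typical values, so that the choice of pair genuinely matters.

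For the low-noise branch, I would show that $|\mathcal{H}_{\text{D}}|$ is controlled by the numerator $|\mathcal{B}\mathcal{C}-\mathcal{A}\mathcal{D}|$, whose $(m_1,m_2)$-dependence is governed by Fresnel-phase differences of the form $2\pi d_{\text{S}}(1/\lambda_{m_1}-1/\lambda_{m_2})$ and $2\pi d_{\text{D}}(k)(1/\lambda_{m_1}-1/\lambda_{m_2})$. When these phase differences sit near an integer multiple of $\pi$, the static terms of the two subcarriers interfere constructively and the ratio cancels most of the dynamic variation, producing the position-dependent degradation invoked in the lemma. Exploiting the fact that the $M$ subcarriers span the full 40~MHz band at 2.4~GHz, I would argue that the set $\{1/\lambda_{m_1}-1/\lambda_{m_2}\}$ is rich enough that for every admissible $d_{\text{S}}$ there exists a pair $(m_1^\star,m_2^\star)$ for which $|\mathcal{H}_{\text{D}}|$ is bounded below by a position-independent constant. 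For the high-noise branch I would repeat the analysis on $(\mathcal{A}/\mathcal{D})\cdot\mathcal{Z}$, where the subcarrier pair enters through wavelength-dependent amplitude weightings of the static and dynamic vectors; selecting the pair that maximises $|\mathcal{A}/\mathcal{D}|$ amplifies the respiratory factor $\mathcal{Z}$ without a commensurate increase in the residual Gaussian noise, which is why the selection remains beneficial though less dramatic than in the low-noise regime.

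The hardest step, I expect, is ruling out simultaneous smallness of the interacting cosine terms that compose $|\mathcal{B}\mathcal{C}-\mathcal{A}\mathcal{D}|$. Rather than grind through every cross-term algebraically, I would lean on a Fresnel-zone argument: as $(\lambda_{m_1},\lambda_{m_2})$ varies over the 40~MHz band, the Fresnel-null loci in the $(d_{\text{S}},d_{\text{D}}(k))$ plane shift by more than one zone, so for any target position at least one admissible pair places the dynamic component away from a null. Combined with the explicit closed form inherited from \textbf{Proposition}~\ref{thm:blind_spot_free}, this geometric observation converts the ``can be maximised'' claim into a quantitative guarantee and yields the lemma.
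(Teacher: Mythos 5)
Your proposal follows essentially the same route as the paper: both branches start from the Mobius decomposition of \textbf{Proposition}~\ref{thm:blind_spot_free}, identify the numerator $\mathcal{B}\mathcal{C}-\mathcal{A}\mathcal{D}$ as the source of position-dependent nulling via the cross-wavelength phase term $2\pi(d_0-d_{\text{S}})(\lambda_{m_1}-\lambda_{m_2})/(\lambda_{m_1}\lambda_{m_2})$, argue that sweeping $(m_1,m_2)$ over the 40~MHz band retunes that phase to escape the null, and reduce the high-noise case to maximizing $|\mathcal{A}/\mathcal{D}|=|A_{\text{D}}(m_1)|/|A_{\text{S}}(m_2)|$ via frequency selectivity. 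The only caveat is that your promised position-independent lower bound is stronger than what the paper establishes (it settles for a numerical example), and the paper's own quantification shows the exponential term only sweeps the full unit circle when $d_0-d_{\text{S}}>4.14$~m, so your uniform guarantee would need qualification for nearer targets.
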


\begin{proof}
To demonstrate that strategic subcarrier selection can enhance the respiratory signal, we provide a detailed analysis of the amplitude of the dynamic component, $|\mathcal{H}_{\text{D}}(m_1,m_2,k)|$, under both low-noise and high-noise conditions. Our analysis reveals how proactive selection of the subcarrier pair $(m_1, m_2)$ serves as a fundamental method for improving signal quality.

\paragraph{Low-Noise Scenario}
In a low-noise environment, the behavior of the proposed CSCR is primarily governed by the interaction between static and dynamic signal path components. Based on the Mobius transformation~\eqref{eq:mobius_rewritten} detailed in Appendix~\ref{appendix:proof_theorem1}, the amplitude of the dynamic component, $|\mathcal{H}_{\text{D}}(m_1,m_2,k)|$, can be approximated as:
\begin{equation}
|\mathcal{H}_{\text{D}}(m_1,m_2,k)| \approx \left| \frac{\mathcal{B}\mathcal{C} - \mathcal{A}\mathcal{D}}{\mathcal{C}^2} \cdot \frac{1}{\mathcal{Z} + \frac{\mathcal{D}}{\mathcal{C}}} \right|.
\label{eq:full_mobius_hd}
\end{equation}
Assuming the user's macroscopic position remains stationary over the monitoring period, the propagation distance of the signal path associated with respiration, $d_{\text{D}}(k)$, is considered constant. Consequently, the large-scale fading effects are negligible, which allows us to approximate the dynamic amplitudes $A_{\text{D}}(m_1,k)$ and $A_{\text{D}}(m_2,k)$ as constants independent of the time index $k$. The simplification enables the removal of $k$ from $|\mathcal{H}_{\text{D}}(m_1,m_2,k)|$, leading to the expression:
\begin{equation}
|\mathcal{H}_{\text{D}}(m_1,m_2)| \approx \left| \frac{
\begin{aligned}[b]
& A_{\text{S}}(m_1)A_{\text{D}}(m_2) \text{e}^{-\text{j}2\pi (d_0-d_{\text{S}}) \frac{\lambda_{m_1}-\lambda_{m_2}}{\lambda_{m_1}\lambda_{m_2}}} \\
& \quad - A_{\text{D}}(m_1)A_{\text{S}}(m_2)
\end{aligned}
}{ |A_{\text{S}}(m_2)|^2 - |A_{\text{D}}(m_2)|^2 } \right|,
\label{eq:low_noise_hd_detailed}
\end{equation}
where $d_0-d_{\text{S}}$ is determined by the user's location.
An analysis of \eqref{eq:low_noise_hd_detailed} reveals a potential nulling effect. 
Specifically, for a given subcarrier pair $(m_1, m_2)$, the numerator may approach zero at certain user locations, which are determined by the propagation distance $d_0-d_{\text{S}}$. 
Consequently, the energy of the respiratory signal $|\mathcal{H}_{\text{D}}(m_1,m_2)|$ diminishes to nearly zero, a value too low for reliable measurement.

However, the term, $\text{e}^{-\text{j}2\pi (d_0-d_{\text{S}}) \frac{\lambda_{m_1}-\lambda_{m_2}}{\lambda_{m_1}\lambda_{m_2}}}$, is highly sensitive to the choice of subcarrier wavelengths\footnote{In the 2.4\,GHz band under the 802.11n protocol with a 40\,MHz bandwidth (e.g., channel 11), the term $(\lambda_{m_1}-\lambda_{m_2})/(\lambda_{m_1}\lambda_{m_2})$ can range from -0.12092 to 0.12092. When $d_0-d_{\text{S}} > 4.14$\,m, the exponential term can be tuned to almost any value on the unit circle by simply selecting a different subcarrier pair.}. The tunability provides a powerful mechanism to address position-induced signal nulls.
As a concrete example, consider a simple scenario where $d_0 - d_{\text{S}} = 4$\,m and all channel gain amplitudes are normalized, i.e., $|A_{\text{S}}(m)|=1$ and $|A_{\text{D}}(m)|=0.1$ for any subcarrier $m$.
\begin{itemize}
    \item When we select the subcarrier pair $(m_1=1, m_2=2)$, the calculated amplitude of the dynamic component is $|\mathcal{H}_{\text{D}}(m_1=1,m_2=2)| = 0.00265$.
    \item In contrast, when we select the pair $(m_1=1, m_2=114)$, the amplitude becomes $|\mathcal{H}_{\text{D}}(m_1=1,m_2=114)| = 0.2017$.
\end{itemize}
A 76-fold increase in signal strength was achieved solely by adjusting the subcarrier pair. The improvement highlights that optimizing subcarrier selection is an effective method for mitigating position-dependent signal degradation and enhancing signal quality, particularly in low-noise environments.

\paragraph{High-Noise Scenario}
In a high-noise environment, the amplitude of the dynamic component, $|\mathcal{H}_{\text{D}}(m_1,m_2,k)|$, is given by~\eqref{eq:high_noise_averaged}:
\begin{equation}
    |\mathcal{H}_{\text{D}}(m_1,m_2)| \approx \left| \frac{\mathcal{A}\mathcal{Z}}{\mathcal{D}} \right| = \frac{|A_{\text{D}}(m_1)|}{|A_{\text{S}}(m_2)|}.
    \label{eq:high_noise_hd_simplified}
\end{equation}
Equation~\eqref{eq:high_noise_hd_simplified} indicates that the resulting signal strength is determined by the ratio between the dynamic path gain at subcarrier $m_1$ and the static path gain at subcarrier $m_2$. The value of~\eqref{eq:high_noise_hd_simplified} varies significantly with the chosen subcarrier pair ($m_1, m_2$) due to the inherent frequency selectivity of path gains. Consequently, subcarrier selection remains a beneficial strategy for maximizing the SSNR, even in high-noise environments.

In both low-noise and high-noise scenarios, the subcarrier selection is a crucial mechanism for maximizing the respiratory signal component and enhancing overall system performance.
\end{proof}

The principle in \textbf{Lemma}~\ref{lem:subcarrier_selection} provides a method to address the issue of position-dependent signal degradation, which is critical in both low-noise and high-noise environments.
The probability of identifying an optimal pair $(m_1, m_2)$ can be enhanced through the expansion of the candidate subcarrier pool, which is achieved by leveraging CSI from multiple preamble fields\footnote{For instance, the IEEE 802.11n standard~\cite{wifi802.11} provides CSI from both the L-LTF and the HT-LTF. Utilizing both fields increases the number of available subcarriers from 114 (from HT-LTF alone) to 218.}. Such multi-field approach is robust because a common hardware chain generates the CSI for different fields. Consequently, the hardware-induced noise components between fields are highly correlated, allowing our proposed CSCR approach to effectively cancel random phase offsets and suppress amplitude impulse noise.

\subsection{SA-WiSense Framework}
\label{sec:system_structure}

\begin{figure}[t]
    \centering
    \includegraphics[width=0.95\columnwidth]{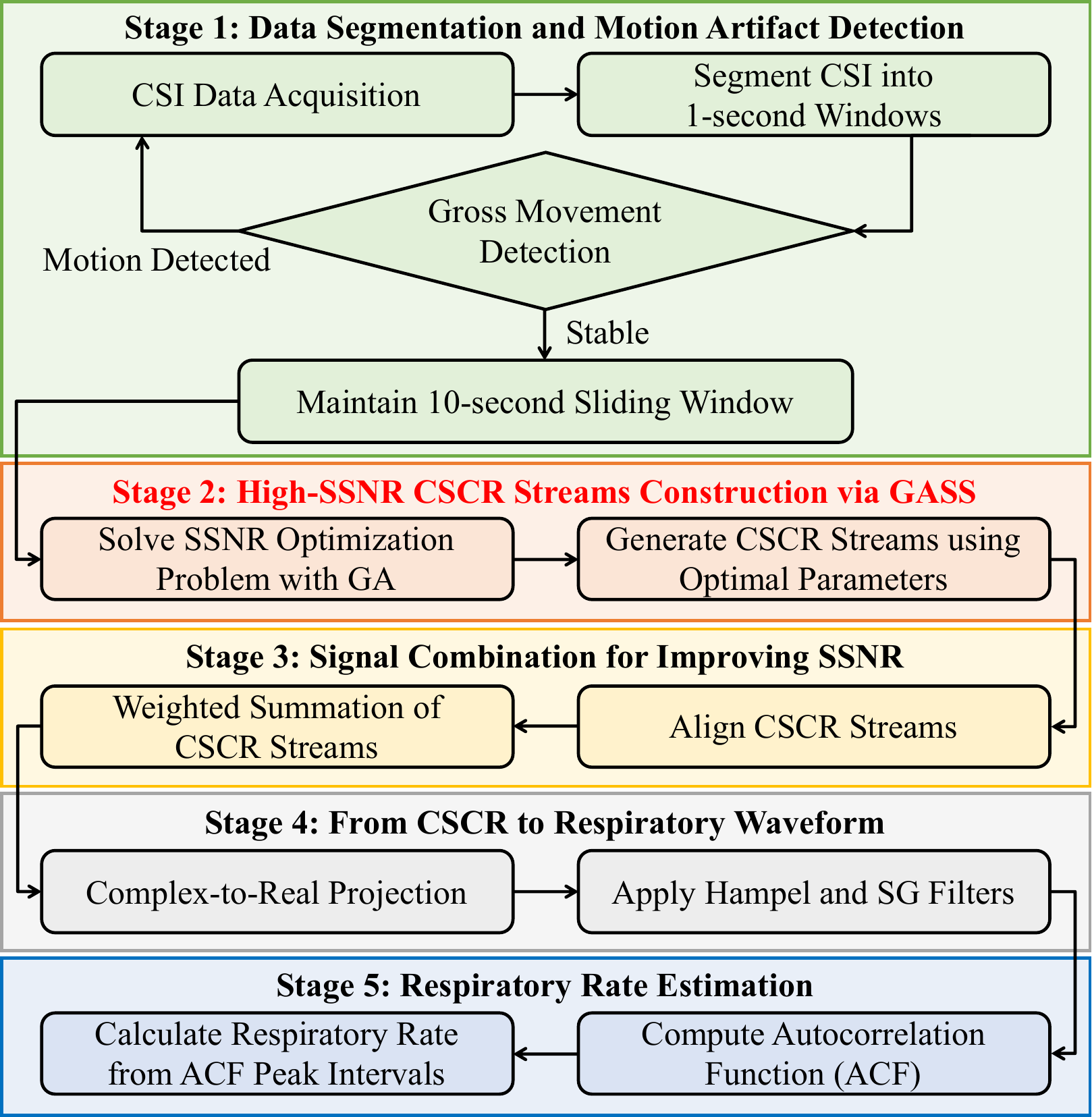}
    \caption{
    The flow diagram of the SA-WiSense framework. Our core innovation in Stage~2 is the proposed GASS approach, which implements our novel CSCR approach to construct high-SSNR data streams for robust sensing, following the initial data segmentation and preceding the subsequent signal combination, waveform recovery, and rate estimation.
    }
    \label{fig:system_diagram_placeholder}
\end{figure}

Based on \textbf{Proposition}~\ref{thm:blind_spot_free} and \textbf{Lemma}~\ref{lem:subcarrier_selection}, we introduce SA-WiSense, a framework for real-time, blind-spot-free and high-SSNR respiratory sensing with single-antenna devices.
As illustrated in Fig.~\ref{fig:system_diagram_placeholder}, the proposed SA-WiSense comprises five sequential stages. 
First, stable 10-second windows are selected from raw CSI to acquire without gross movement.
Next, the core GASS stage optimizes subcarrier selection based on GA and generates high-SSNR CSCR streams.
The CSCR streams are then combined for improving SSNR by suppressing Gaussian noise.
Next, A real-valued respiratory waveform is recovered from the resulting high-SSNR complex signal via complex-to-real projection and refined using Hampel and Savitzky-Golay (SG) filters.
Finally, the respiratory rate is determined by analyzing the peaks of autocorrelation function (ACF) of the refined waveform. 
The implementation details of the stages are presented as follows.

\subsubsection{Data Segmentation and Motion Artifact Detection}
The proposed framework first segments the incoming CSI data stream into 1-second frames for initial motion artifact detection. Adopting the principle from~\cite{bao2023wi}, we identify a frame as containing gross movement if the phase change exceeds a 2-radian threshold, and temporarily pause respiratory signal extraction for such frames.
Frames without gross movement are then concatenated to construct a 10-second sliding window for respiratory rate estimation.
We specifically select the 1-second frame duration to enable near-real-time processing, whereas the 10-second window length is chosen to capture a sufficient number of respiratory cycles (approximately 1.67 to 5 cycles for a typical 0.167--0.5~Hz rate~\cite{wang2016human}) to ensure robust estimation.

\subsubsection{High-SSNR CSCR Streams Construction via GASS}
\label{sec:implementation}
Based on \textbf{Lemma}~\ref{lem:subcarrier_selection} in Section~\ref{sec:SSNR_design}, selecting optimal subcarrier pairs to construct a CSCR improves the resulting SSNR.
With 218 available subcarriers, the search space of potential combinations is immense, rendering an exhaustive search computationally prohibitive.
We therefore formulate the selection task as an optimization problem and employ a GA, a powerful heuristic method, well-suited for efficiently finding near-optimal solutions in the large-scale search spaces.

A combined CSCR is constructed as a weighted sum of $N$ CSCRs, i.e., $\sum_{i=1}^N{a_i\mathcal{H}(m_i,m_{\text{d}},k)}$, where all component CSCRs share a common denominator subcarrier index $m_{\text{d}}$. 
To maximize the SSNR of the combined CSCR, we determine the optimal weighting coefficients $\{a_i|i=1\cdots N\}$ and subcarrier indices $\{m_i|i=1\cdots N, m_{\text{d}}\}$ by solving the following optimization problem: 
\begin{equation}
\textbf{P1:} \underset{ \{a_i,m_i|i=1\cdots N\},m_{\text{d}}}{\text{max}} \quad R\left(\displaystyle\sum_{i=1}^N{a_i\mathcal{H}(m_i,m_{\text{d}},k)}\right),
\end{equation}
\begin{equation}
\text{subject to } \left\{
\begin{array}{@{}l@{\thinspace}l}
|a_i| \le 1, \quad i=1,2,...,N, \\
1 \le m_i \le M, \quad m_i \in \mathbb{Z}, \\
1 \le m_{\text{d}} \le M, \quad m_{\text{d}} \in \mathbb{Z}.
\end{array}
\right.,
\end{equation}
where $\mathbb{Z}$ represents the set of integers. 
Each individual CSCR term in the summation eliminates random phase offsets and impulse noise, while restoring the complementary characteristics of amplitude and phase.
We employ a GA to solve the optimization problem. The elite-preserving strategy of the GA ensures that the SSNR of the resulting combined CSCR is no less than that of any component CSCR within the combination.

Upon convergence, the GA yields the optimal parameters $\{a_i^*, m_i^*,|i=1\cdots N\}$, which are then used to construct a set of high-SSNR CSCR streams for the subsequent signal combination stage, calculated as follows:
\begin{equation}
\mathcal{H}_{\text{str}}(m,k)=\frac{\displaystyle\sum_{i=1}^N{a_i^* \tilde{H}\left( m_i^*,k \right)}}{\tilde{H}\left( m,k \right)}.
\end{equation}

\subsubsection{CSCR streams Combination for Improving SSNR}
To further improve the SSNR, we employ the signal combination technique from~\cite{diversense2022}. The principle of the signal combination implementation lies in the coherent combination of multiple CSCR streams obtained by Section~\ref{sec:implementation}. Such a combination process effectively averages out uncorrelated noise while constructively aggregating in-phase respiratory signals, thereby enhancing the overall signal SSNR.

Coherent combination relies on aligning the CSCR streams $\mathcal{H}_{\text{str}}(m,k)$ in the complex plane. Such an alignment operation compensates for the distinct initial phase and amplitude of each stream. The alignment process, detailed from~\eqref{Qmk} to~\eqref{theta_mk}, initiates by subtracting the amplitude offset from each stream. The result after the subtraction, $Q(m,k)$, is given by~\cite{diversense2022}:
\begin{equation}
\label{Qmk}
Q(m,k) = \mathcal{H}_{\text{str}}(m,k) - \sum_{i=1}^K{\frac{\mathcal{H}_{\text{str}}(m,k)(m,(k-1)K+i)}{K}}.
\end{equation}
Next, we calculate the average amplitude $G(m)$ for each stream:
\begin{equation}
\label{Gm}
G(m) = \max_k\left(\left| \frac{1}{K_3}\sum_{i=k}^{k+K_3-1}{Q(m,i)} \right|\right),
\end{equation}
where $K_3$ is the window size for calculating the average. 
Each stream $Q(m,k)$ is then normalized by its average amplitude $G(m)$ to produce the normalized signal $V(m,k)$, given by:
\begin{equation}
\label{Vmk}
    V(m,k) = G(m) Q(m,k).
\end{equation}
Next, the rotation angle $\varTheta_m$ is calculated for each signal $V(m,k)$ relative to the reference signal $V(m_0,k)$:
\begin{equation}
\label{theta_mk}
\underset{\varTheta_{m}}{\arg\min}\ \text{Dis}(m) = \sum_{k=1}^K{\left| V(m_0,k) - V(m,k)\text{e}^{\text{j}\varTheta_m} \right|^2}.
\end{equation}
The alignment is completed by rotating each signal $V(m,k)$ by the angle $\varTheta_m$. 

After alignment, the signals are combined using a weighted sum, which is determined by the SSNR. We first calculate an initial weight $\beta(m) = R(\mathcal{H}_{\text{str}}(m,k))$. To filter out noisy streams, a threshold $\mu \beta_0$ is applied, where $\beta_0 = \max(\beta(m))$ and $\mu \in [0,1]$. The final weight $\gamma(m)$ is given by $\gamma(m) = \beta(m) u(\beta(m) - \mu\beta_0)$, where $u(\cdot)$ is the Heaviside step function. The aligned CSCR signals are then combined with the weights to produce the combined signal $\mathcal{H}_{\text{Fd}}(k)$:
\begin{equation}
\label{HFd}
\mathcal{H}_{\text{Fd}}(k) = \sum_{m=1}^M{\left(\gamma(m)V(m,k)\text{e}^{\text{j}\varTheta_m}\right)}.
\end{equation}
Finally, the signal is smoothed using a moving average filter with a 0.33-second window ($I=\lfloor 0.33F_\text{s} \rfloor$ samples, where $F_\text{s}$ denotes the sampling rate)~\cite{diversense2022} to yield the final enhanced signal $\mathcal{H}_{\text{FTd}}(k)$:
\begin{equation}
\label{HFTd}
\mathcal{H}_{\text{FTd}}(k) = \frac{1}{I}\sum_{i=1}^I{\mathcal{H}_{\text{Fd}}((k-1)I+i)}.
\end{equation} 
 
\subsubsection{From CSCR to Respiratory Waveform}
The GA and signal combination processing yield an SSNR-enhanced complex signal, $\mathcal{H}_{\text{FTd}}(k)$. 
To generate a real-valued waveform for rate calculation, $\mathcal{H}_{\text{FTd}}(k)$ is projected onto an optimal axis in the complex plane.
The projection axis, defined by an angle $\varTheta_\text{proj}$, is optimized to maximize the SSNR of the resulting real-valued signal. As proposed in~\cite{zeng2019farsense}, the optimal projected signal, $\mathcal{H}_{\text{proj}}(k)$, is obtained by:
\begin{multline}
\mathcal{H}_{\text{proj}}(k) = \max_{0\le \varTheta_\text{proj} < 2\pi} \Big( R \big( \cos\varTheta_\text{proj} \cdot \operatorname{Re}(\mathcal{H}_{\text{FTd}}(k)) \\
+ \sin\varTheta_\text{proj} \cdot \operatorname{Im}(\mathcal{H}_{\text{FTd}}(k)) \big) \Big),
\end{multline}
where $\operatorname{Re}(\mathcal{H}_{\text{FTd}}(k))$ and $\operatorname{Im}(\mathcal{H}_{\text{FTd}}(k))$ are the real and imaginary parts of the signal, respectively.
The chosen projection scheme leverages the complementarity between the amplitude and phase of $\mathcal{H}_{\text{FTd}}(k)$ to enable blind-spot-free sensing.

To obtain the final, clean respiratory waveform, we apply two filters to $\mathcal{H}_{\text{proj}}(k)$. First, a Hampel filter removes sporadic impulse noise not fully eliminated by the proposed CSCR. Second, a SG filter smooths the waveform, effectively suppressing high-frequency noise while preserving the true morphological features of the respiratory signal~\cite{Tan2022LSTformer}. The resulting filtered signal, $\mathcal{H}_{\text{filt}}(k)$, is then used for respiratory information extraction.

\begin{figure}[t]
    \centering
    \includegraphics[width=0.85\columnwidth]{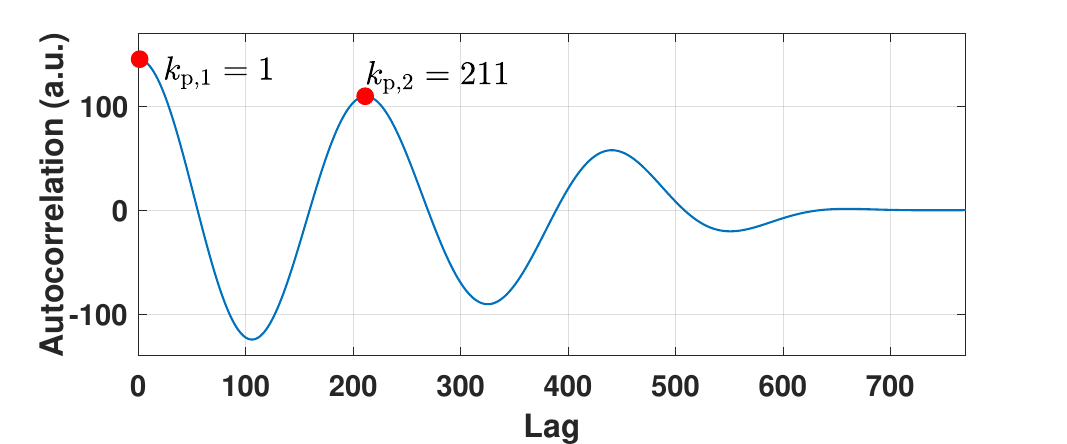}
    \caption{An example of the autocorrelation function (ACF) computed from a processed respiratory waveform ($\mathcal{H}_{\text{filt}}$). The red points are the indices of the peaks of the ACF. The respiratory rate is calculated based on the lag (the number of samples) between the first two consecutive peaks of the ACF, denoted as $k_{\text{p},1}$ and $k_{\text{p},2}$.}
    \label{fig:autocorrelation_example}
\end{figure}

\subsubsection{Respiratory Rate Estimation}
The respiratory rate is calculated from the filtered waveform, $H_{\text{filt}}(k)$, using an autocorrelation-based method.
First, we compute the ACF of the signal. Then, the respiratory rate, denoted as $f_{\text{bpm}}$ in breaths per minute (bpm), is calculated by:
\begin{equation}
    f_{\text{bpm}} = 60 \cdot \frac{F_\text{s}}{|k_{\text{p},2} - k_{\text{p},1}|},
\end{equation}
where $k_{\text{p},1}$ and $k_{\text{p},2}$ represent the indices of the first and second ACF peaks, respectively. As illustrated in Fig.~\ref{fig:autocorrelation_example}, for a signal with a 50~Hz sampling frequency, peaks at indices 1 and 211 yield a respiratory rate of $f_{\text{bpm}} = 60 \cdot {50/|211-1|} = 14.29 \text{bpm}$. To ensure accurate peak identification in the noisy and quasi-periodic respiration signal, we employ the automatic peak detection algorithm from~\cite{John2022Multimodal}.
        \section{Experimental Validation}
\label{sec:EXPERIMENTS AND RESULTS}
A series of experiments is performed to validate the effectiveness of the proposed SA-WiSense framework. The validation is structured into three stages. First, we demonstrate the capability of the proposed CSCR to eliminate hardware-induced random phase offsets and amplitude impulse noise. Second, a simulation is conducted to verify the ability of the proposed SA-WiSense to achieve blind-spot-free respiration sensing by restoring the inherent complementarity of the respiratory signal. Third, the performance of the framework is evaluated in practical scenarios with targets at various locations and distances.

\subsection{Experiment Set-Up}
\subsubsection{Hardware and Data Acquisition}
We use a commodity ESP32 microcontroller to acquire CSI data. The device operates on channel 11 of the 2.4~GHz band with a 40~MHz bandwidth, compliant with the IEEE 802.11n standard~\cite{wifi802.11}. CSI data are collected via the Wi-ESP CSI tool~\cite{wiesp_tool}, which extracts measurements from both the LTF and the HT-LTF, yielding 218 usable subcarriers.
The CSI sampling rate $F_\text{s}$ is configured to 120~Hz, and we verify the precise rate using the `local\_timestamp' provided by the tool.
Five targets participate in the experiments.
In addition, the HKH-11Cplus respiration monitor sensor is uesd to measured the ground truths. During data acquisition, each target remains in a seated position and stay still to minimize motion artifacts.

\subsubsection{Simulation Environment}
We develop a simulation environment to validate the blind-spot-free sensing capability of SA-WiSense. In the simulation, we model a virtual target that emulates a respiration-induced chest movement with a 6~mm depth. The target is positioned at various locations along the LOS path between the TX and RX. 
The corresponding CSI signals are generated based on the model in~\eqref{eq:practical_csi}. The generated CSI data are then used to compute the proposed CSCR. The resulting amplitude, phase, and final projected output of the CSCR are analyzed to evaluate the sensing performance.

\subsection{Validation of Noise Cancellation}
\subsubsection{Hardware Noise Cancellation}
Fig.~\ref{fig:cancellation_demonstration} illustrates the noise cancellation performance of our proposed CSCR approach via an ESP32 platform.
The raw subcarrier phase, shown in Fig.~\ref{fig:cancellation_demonstration}(a), is dominated by large fluctuations from random phase offsets, rendering the phase unsuitable for sensing.
In contrast, the phase of our CSCR, presented in Fig.~\ref{fig:cancellation_demonstration}(c), effectively removes the fluctuations and extracts the respiratory waveform.
In addition, our CSCR mitigates the severe impulse noise corrupting the raw amplitude in Fig.~\ref{fig:cancellation_demonstration}(b), as demonstrated by the cleaned signal in Fig.~\ref{fig:cancellation_demonstration}(d).
While minor residual artifacts are occasionally present, the resulting amplitude is sufficiently clean for reliable respiration analysis.

\begin{figure}[t]
    \centering
    \begin{subfigure}[b]{0.48\columnwidth}
        \centering
        \includegraphics[width=\textwidth]{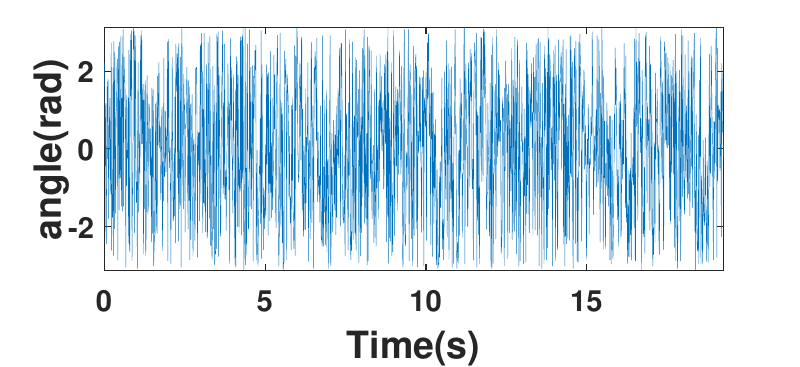}
        \caption{Raw Phase of Subcarrier 50.}
        \label{fig:phase_b}
    \end{subfigure}
    \hfill
    \begin{subfigure}[b]{0.48\columnwidth}
        \centering
        \includegraphics[width=\textwidth]{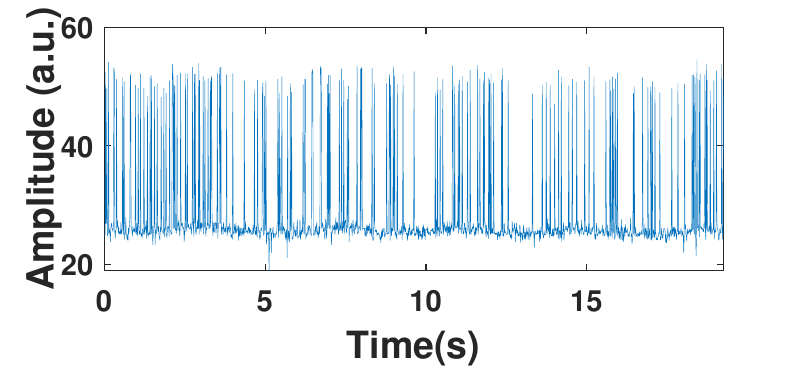}
        \caption{Raw Amplitude of Subcarrier 56.}
        \label{fig:amp_b}
    \end{subfigure}

    \vspace{0.1cm} 

    \begin{subfigure}[b]{0.48\columnwidth}
        \centering
        \includegraphics[width=\textwidth]{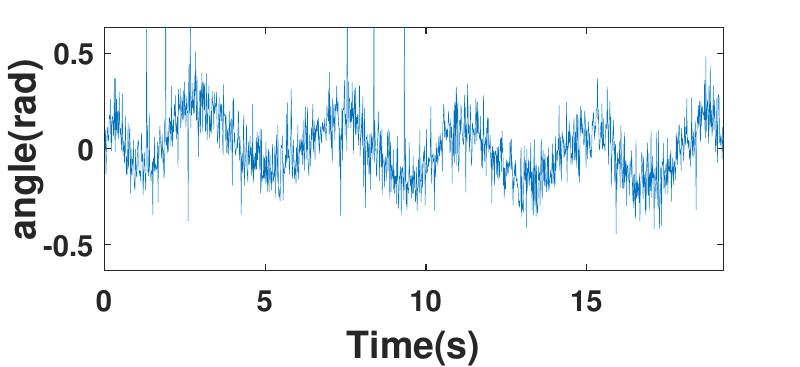}
        \caption{Phase of the Proposed CSCR.}
        \label{fig:phase_c}
    \end{subfigure}
    \hfill
    \begin{subfigure}[b]{0.48\columnwidth}
        \centering
        \includegraphics[width=\textwidth]{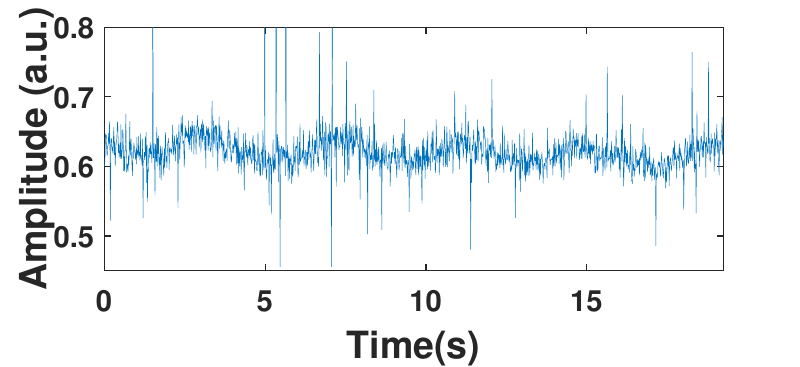}
        \caption{Amplitude of the Proposed CSCR.}
        \label{fig:amp_c}
    \end{subfigure}
    \caption{
    Demonstration of random phase offset cancellation (left column) and amplitude impulse noise suppression (right column). (a) presents the raw CSI phases from a subcarrier, corrupted by random phase offsets. (c) shows the information-bearing phase of the proposed CSCR. (b) displays the raw CSI amplitudes corrupted by impulse noise. (f) displays that the proposed CSCR effectively suppresses such impulse noise.
    }
    \label{fig:cancellation_demonstration}
\end{figure}

\subsubsection{blind-spot-free Sensing Simulation}
We perform a simulation to validate the capability of our proposed CSCR in restoring signal complementarity for blind-spot-free sensing.
Fig.~\ref{fig:blind_spot_verification}(a) presents the amplitude and phase of the processed CSCR at various target-to-LOS distances, which are generated from a simulated target at various distances.
The plot clearly shows that at certain locations, such as 49.8~cm and 53.7~cm, the CSCR amplitude exhibits minimal variation, indicating the insensitivity to the respiratory motion. 
Conversely, at the exact same locations, the phase component exhibits high sensitivity, effectively capturing the motion.
The restored complementarity is fundamental to our approach.
By leveraging the property, the proposed SA-WiSense framework projects the CSCR to reconstruct a continuous respiratory waveform, as depicted in Fig.~\ref{fig:blind_spot_verification}(b).
The result confirms that our proposed SA-WiSense achieves robust and full-coverage respiration monitoring by successfully eliminating the blind spots.

\begin{figure}[t]
    \centering
    \begin{subfigure}[b]{\columnwidth}
        \centering
        \includegraphics[width=\textwidth]{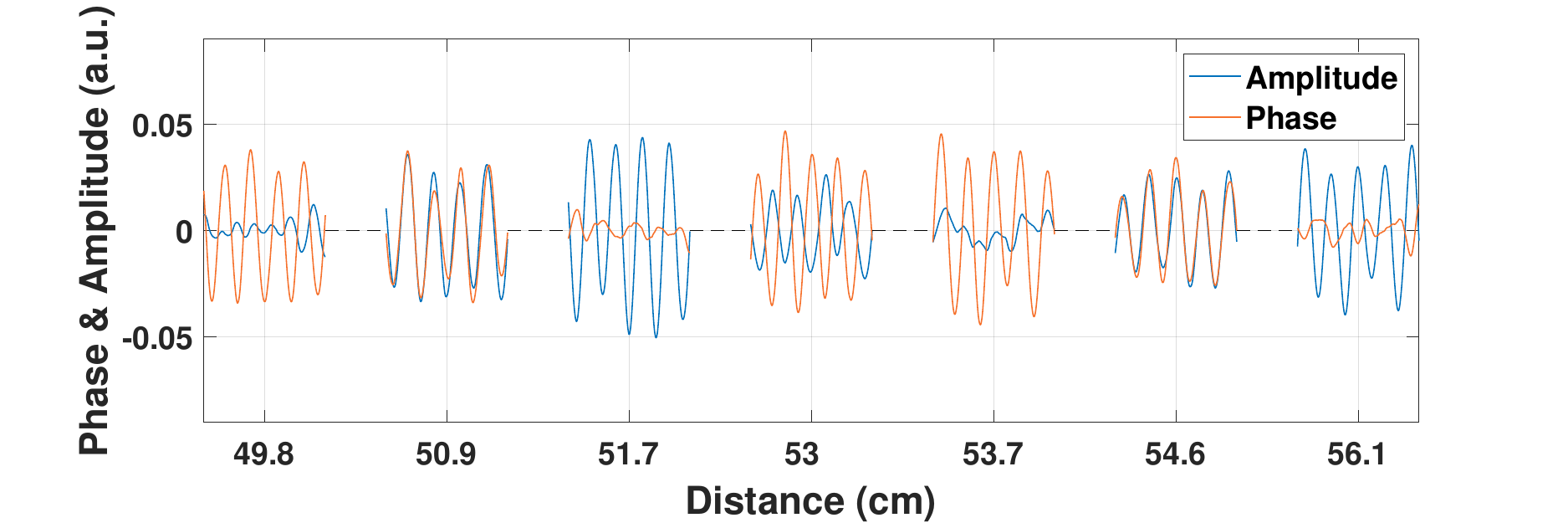}
        \caption{
        Amplitude and phase of the proposed CSCR.
        }
        \label{fig:blind_spot_a}
    \end{subfigure}
    \\
    \vspace{0.3cm}
    \begin{subfigure}[b]{\columnwidth}
        \centering
        \includegraphics[width=\textwidth]{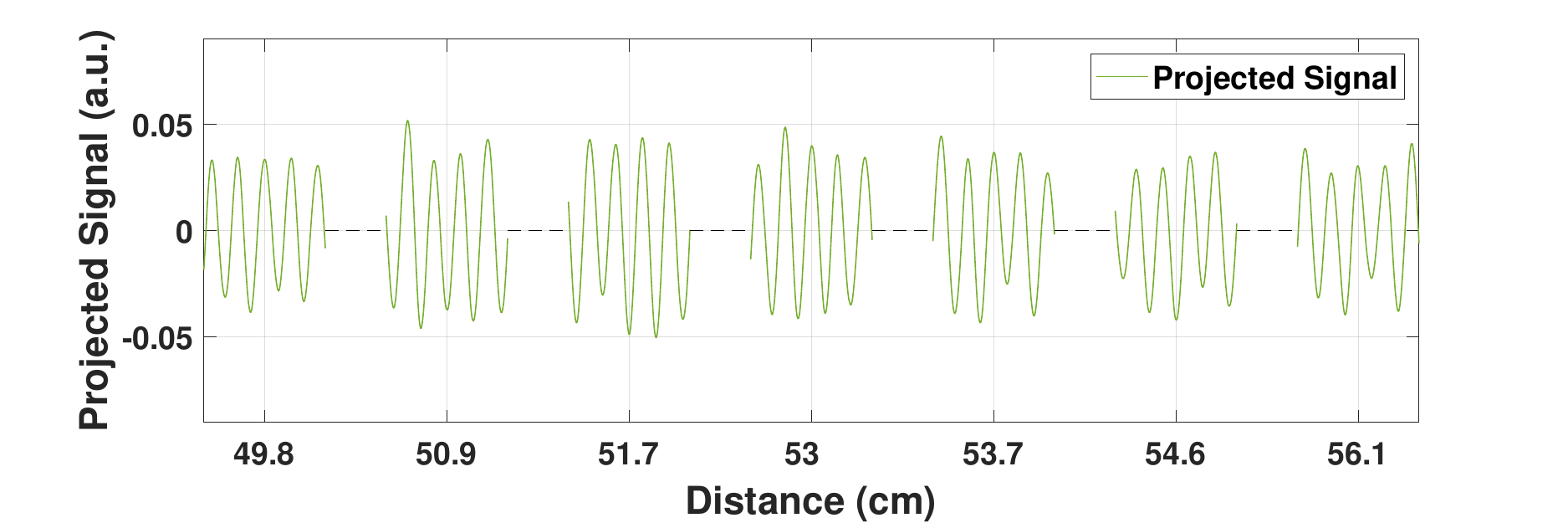}
        \caption{Projected signal from the proposed SA-WiSense.}
        \label{fig:blind_spot_b}
    \end{subfigure}
    \caption{
    Demonstration of blind-spot-free sensing for a human target at various distances along the LOS path. (a) shows complementarity of the proposed CSCR amplitude and phase. The amplitude (blue) is negligible at certain locations (e.g., 49.8~cm and 53.7~cm) where the phase (orange) is sensitive to motion. Conversely, the phase shows minimal response at other locations (e.g., 51.7~cm and 56.1~cm) where the amplitude remains sensitive. (b) shows the signal projected by the proposed SA-WiSense, which leverages both the amplitude and phase to ensure robust sensing performance across all target locations.
    }
    \label{fig:blind_spot_verification}
\end{figure}

\subsection{Impact of Target's Location on Sensing Performance}

\begin{figure}[!t]
    \centering
    \begin{subfigure}[b]{0.48\columnwidth}
        \centering
        \includegraphics[width=\textwidth]{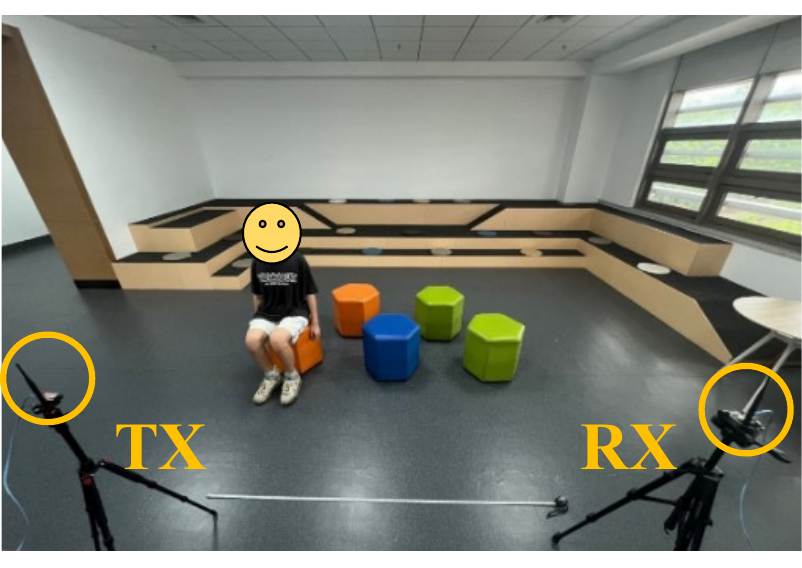}
        \caption{Experimental environment layout.}
        \label{fig:exp_env}
    \end{subfigure}
    \hfill
    \begin{subfigure}[b]{0.48\columnwidth}
        \centering
        \includegraphics[width=\textwidth]{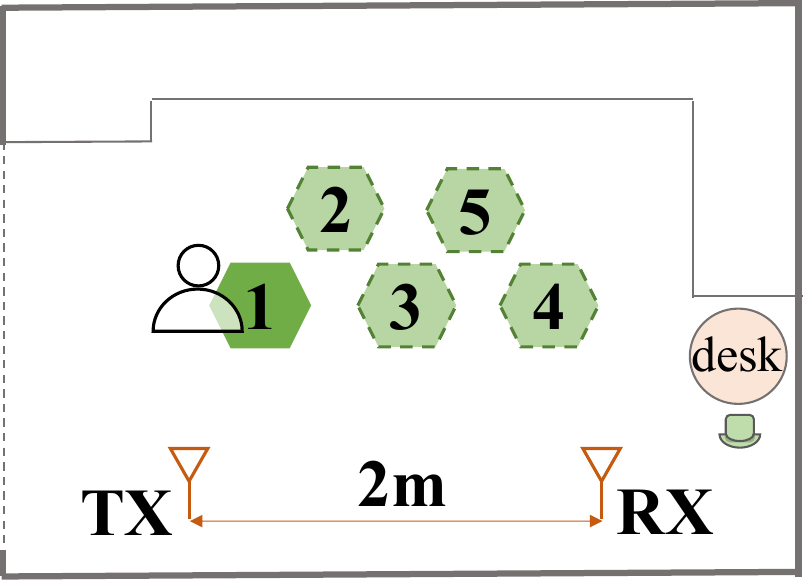}
        \caption{Five sensing locations.}
        \label{fig:exp_pos}
    \end{subfigure}
    \caption{Experimental setup for evaluating the impact of locations. The distance between the TX and RX is 2~m. The target is sequentially positioned at five distinct locations marked 1 through 5.}
    \label{fig:exp_locations}
\end{figure}
\begin{figure}[!t]
	\centering
	\includegraphics[width=\columnwidth]{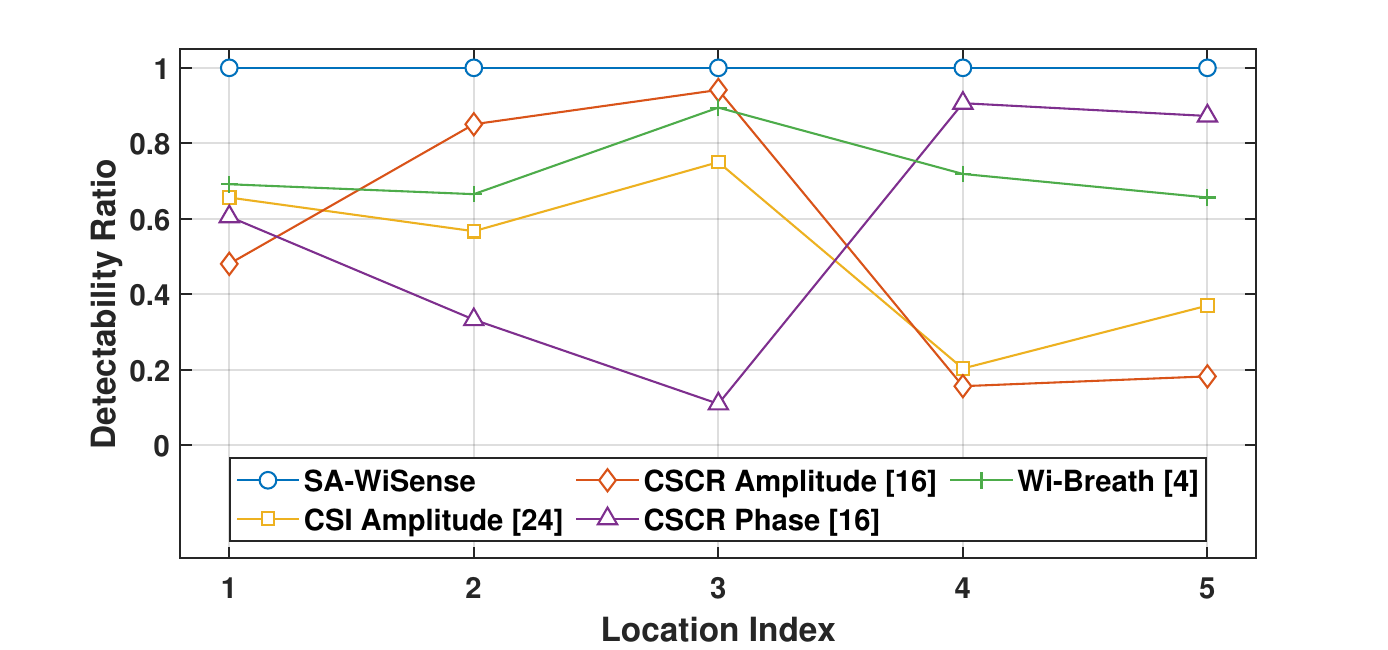}
	\caption{Comparison of detectability ratio for different sensing schemes at five locations. SA-WiSense maintains a 100\% detectability ratio across all locations, demonstrating the effectiveness in eliminating blind spots.}
	\label{fig:loc_results}
\end{figure}
Fig.~\ref{fig:loc_results} shows the performance evaluation of our proposed SA-WiSense against blind spots, measured by the detectability ratio~\cite{zeng2018fullbreathe}.
The experiment is conducted in the indoor environment depicted in Fig.~\ref{fig:exp_locations}, where a target remained stationary at five distinct locations to simulate diverse initial Fresnel phases. 
Our proposed SA-WiSense achieves a 100\% detectability ratio across all five locations, demonstrating robust respiration monitoring regardless of the target's position.
The underlying mechanism for such robust performance is the effective fusion of CSCR amplitude and phase, a principle validated by comparing with component-wise baselines (CSCR amplitude and CSCR phase) adopted from the evaluation methodology in~\cite{zeng2018fullbreathe}.
For instance, the CSCR amplitude method succeeds at location 3 where the CSCR phase method fails, while the opposite occurs at location 4.
By adaptively projecting the complex CSCR as detailed in Section~\ref{sec:system_structure}, SA-WiSense effectively achieves full-coverage sensing by leveraging the inherent complementarity between the amplitude and phase, a mechanism theoretically grounded in \textbf{Proposition}~\ref{thm:blind_spot_free}.
In contrast, the conventional CSI amplitude method~\cite{khan2023novel} suffers from significant performance degradation at locations 4 and 5, confirming the existence of location-dependent blind spots in amplitude-only schemes. In addition, SA-WiSense demonstrates superior performance compared to the state-of-the-art Wi-Breath~\cite{bao2023wi}. The performance of Wi-Breath is constrained in our single-antenna setup, as the scheme was originally designed to leverage spatial diversity from multi-antenna systems.

\subsection{Impact of Target's Distance on Sensing Performance}

\begin{figure}[!t]
    \centering
    \begin{subfigure}[b]{0.48\columnwidth}
        \centering
        \includegraphics[width=\textwidth]{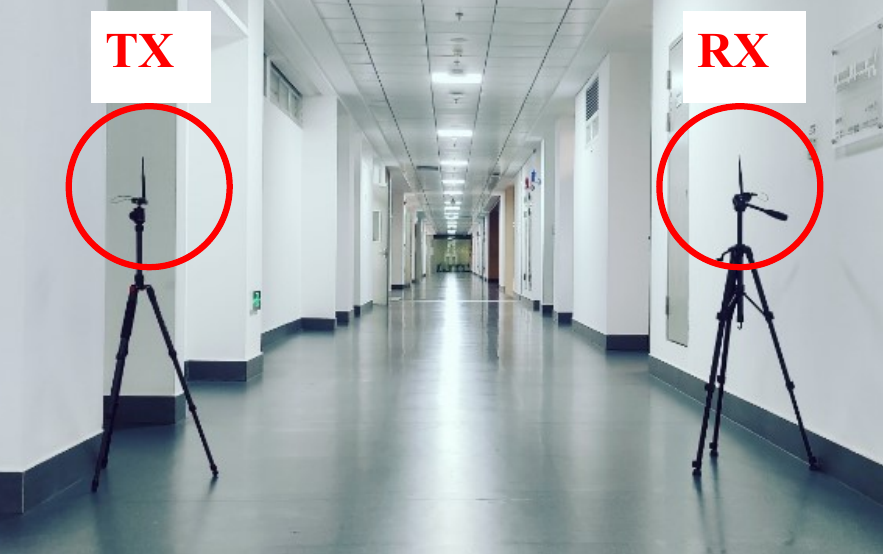}
        \caption{Physical experimental scene.}
        \label{fig:exp2_env}
    \end{subfigure}
    \hfill
    \begin{subfigure}[b]{0.48\columnwidth}
        \centering
        \includegraphics[width=\textwidth]{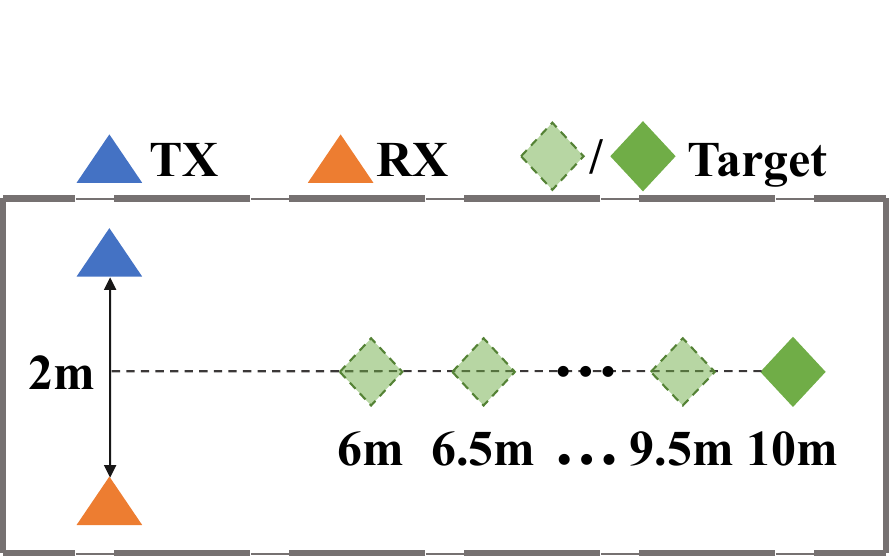}
        \caption{Geometric layout.}
        \label{fig:exp2_dis}
    \end{subfigure}
    \caption{Experimental setup for evaluating the impact of sensing distance. The TX and RX are placed 2 m apart, and the target's distance from the LOS path is varied from 6~m to 10~m.}
    \label{fig:exp2_distances}
\end{figure}
\begin{figure}[!t]
	\centering
	\includegraphics[width=\columnwidth]{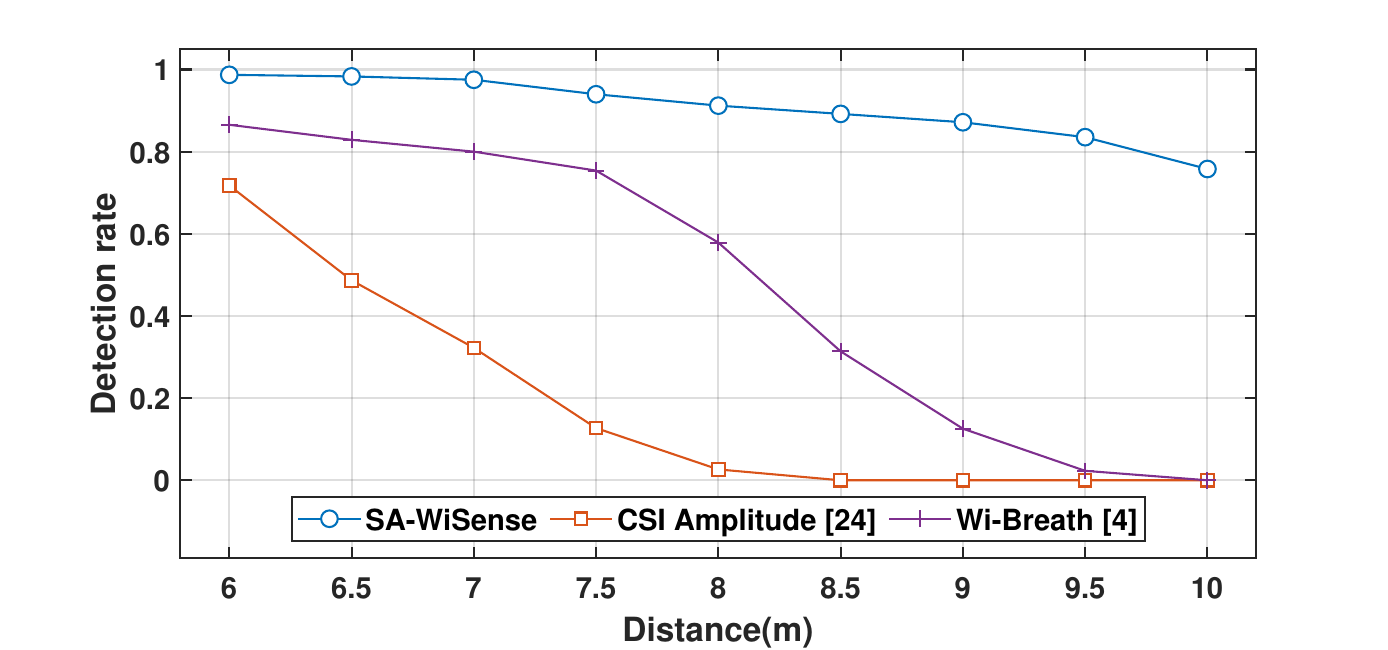}
	\caption{Comparison of detection rate for SA-WiSense and baseline schemes at different sensing distances.}
	\label{fig:dis_results}
\end{figure}

Fig.~\ref{fig:dis_results} presents a performance comparison of SA-WiSense against two baseline schemes over extended sensing distances. 
The experimental environment is depicted in Fig.~\ref{fig:exp2_distances}, where a target is positioned at distances from 6~m to 10~m away from the LOS path.
For evaluation, the detection rate is defined as the percentage of CSI measurements with an absolute estimation error of less than 1~bpm~\cite{zeng2019farsense}.
The results show that all schemes perform well at the initial distance of 6~m. As the distance increases to 8~m, the detection rate of the CSI Amplitude scheme~\cite{khan2023novel} falls to nearly zero, and the rate for Wi-Breath~\cite{bao2023wi} decreases to below 60\%. By contrast, SA-WiSense sustains a high detection rate of over 91.2\% at all tested distances up to 8~m. Therefore, our proposed SA-WiSense enables robust sensing for single-antenna devices, especially under low-SSNR conditions caused by extended distances.
        \section{Conclusion}
\label{sec:conclusion}
This paper proposed SA-WiSense, a cost-effective framework for robust human respiration monitoring using single-antenna Wi-Fi devices. The framework addresses the critical blind-spot problem by introducing a novel CSCR approach. We proved that the proposed CSCR leverages frequency diversity to cancel hardware-induced random phase offsets, thereby restoring the signal complementarity essential for full-coverage sensing.
To further ensure robust performance, we developed a GASS approach. The approach formulates an optimization problem to maximize the SSNR of the constructed CSCR.
We implemented the SA-WiSense framework on commodity ESP32 microcontrollers.
Experimental results demonstrate that SA-WiSense achieves a 100\% detectability ratio across all tested target locations, effectively overcoming the location-dependent blind spot issue.
Furthermore, SA-WiSense framework maintains a high detection rate of 91.2\% for distances up to 8.0 meters, which significantly outperforms conventional single-antenna approaches.
The current work focuses on single-target scenarios. Future work will extend the SA-WiSense framework to support simultaneous multi-person respiration monitoring.

        \vspace{-1mm}
\appendices

\color{black}
\section{Proof of Proposition~\ref{thm:blind_spot_free}}
\label{appendix:proof_theorem1}
We prove \textbf{Proposition}~\ref{thm:blind_spot_free} by demonstrating that the proposed CSCR restores the complementarity of the amplitude and phase under both low-noise and high-noise conditions.

\subsection{Low-Noise Condition}
\label{Low-Noise Condition}
Under the low-noise condition, the influence of the residual Gaussian noise $\dot{\varepsilon}(m,k)$ is negligible ($A_{\text{D}}(m,k) \gg \dot{\varepsilon}(m,k)$). The proposed CSCR from \eqref{eq:simplified_ratio} is approximated as:
\begin{equation}
\label{eq:low_noise_ratio}
\begin{aligned}
\MoveEqLeft \mathcal{H}(m_1,m_2,k) \\ &\approx \frac{H_{\text{S}}(m_1) + A_{\text{D}}(m_1,k) \text{e}^{-\frac{\text{j}2\pi d_{\text{D}}(k)}{\lambda_{m_1}}}}{H_{\text{S}}(m_2) + A_{\text{D}}(m_2,k) \text{e}^{-\frac{\text{j}2\pi d_{\text{D}}(k)}{\lambda_{m_2}}}} \text{e}^{-\text{j}\Delta\theta(m_1,m_2)}.
\end{aligned}
\end{equation}
Noting that $1/\lambda_{m_2} = (1/\lambda_{m_1})(1 - (\lambda_{m_2}-\lambda_{m_1})/\lambda_{m_2})$, the expression can be rewritten as:
\begin{align}
\label{eq:low_noise_rewritten}
\MoveEqLeft \mathcal{H}(m_1,m_2,k) \nonumber \\
& \approx \frac{(H_{\text{S}}(m_1) + A_{\text{D}}(m_1,k) \text{e}^{-\frac{\text{j}2\pi d_{\text{D}}(k)}{\lambda_{m_1}}})\text{e}^{-\text{j}\Delta\theta(m_1,m_2)}}{H_{\text{S}}(m_2) + A_{\text{D}}(m_2,k) \text{e}^{-\frac{\text{j}2\pi d_{\text{D}}(k)}{\lambda_{m_1}}} \text{e}^{\frac{\text{j}2\pi d_{\text{D}}(k)}{\lambda_{m_1}} \cdot \frac{\lambda_{m_2}-\lambda_{m_1}}{\lambda_{m_2}}}}.
\end{align}
Let the dynamic path length be $d_{\text{D}}(k) = d_0 + \varDelta d(k)$, where $d_0$ is the static component of the path length, and $\varDelta d(k)$ is the small variation from respiration. Given that respiratory displacement (5-12~mm~\cite{wang2016human}) is much smaller than the signal wavelength (around 12.2~cm), we have $\varDelta d(k) \in [-\lambda_{m_1}/10, \lambda_{m_1}/10]$. Moreover, the term $(\lambda_{m_2}-\lambda_{m_1})/\lambda_{m_2} \ll 1$\footnote{For the HT-LTF field, the maximum and minimum subcarrier frequencies are 2470.125~MHz and 2433.875~MHz. The value range of the term $(\lambda_{m_2}-\lambda_{m_1})/\lambda_{m_2}$ is [-0.014894, 0.014675].}. Consequently, the term $\text{e}^{\frac{\text{j}2\pi \varDelta d(k)}{\lambda_{m_1}} \cdot \frac{\lambda_{m_2}-\lambda_{m_1}}{\lambda_{m_2}}}$ approximates to 1\footnote{Under the assumption in the previous footnote, the real part of the exponential term $\text{e}^{\frac{\text{j}2\pi \varDelta d(k)}{\lambda_{m_1}} \cdot \frac{\lambda_{m_2}-\lambda_{m_1}}{\lambda_{m_2}}}$ is in [0.999956, 1], and the imaginary part is in [-0.00936, 0.00936].}. The proposed CSCR can be approximated as:
\begin{align}
\label{eq:low_noise_approximated}
\MoveEqLeft \mathcal{H}(m_1,m_2,k) \nonumber \\
& \approx \frac{(H_{\text{S}}(m_1) + A_{\text{D}}(m_1,k) \text{e}^{-\frac{\text{j}2\pi(d_0+\varDelta d(k))}{\lambda_{m_1}}})\text{e}^{-\text{j}\Delta\theta(m_1,m_2)}}{H_{\text{S}}(m_2) + A_{\text{D}}(m_2,k) \text{e}^{-\frac{\text{j}2\pi(d_0+\varDelta d(k))}{\lambda_{m_1}}} \text{e}^{\frac{\text{j}2\pi d_0}{\lambda_{m_1}}\cdot\frac{\lambda_{m_2}-\lambda_{m_1}}{\lambda_{m_2}}}}.
\end{align}
To reduce the analysis, let us define the following complex variables:
\begin{align}
    \mathcal{A} & = A_{\text{D}}(m_1,k)\text{e}^{-\text{j}\Delta\theta(m_1,m_2)}, \label{eq:def_A}\\
    \mathcal{B} & = H_{\text{S}}(m_1)\text{e}^{-\text{j}\Delta\theta(m_1,m_2)}, \label{eq:def_B}\\
    \mathcal{C} & = A_{\text{D}}(m_2,k)\text{e}^{\frac{\text{j}2\pi d_0}{\lambda_{m_1}}\cdot\frac{\lambda_{m_2}-\lambda_{m_1}}{\lambda_{m_2}}}, \label{eq:def_C}\\
    \mathcal{D} & = H_{\text{S}}(m_2), \label{eq:def_D}\\
    \mathcal{Z} & = \text{e}^{-\frac{\text{j}2\pi(d_0+\varDelta d(k))}.{\lambda_{m_1}}} \label{eq:def_Z}
\end{align}
The expression reduces to a Mobius transformation form:
\begin{equation}
\label{eq:mobius_form}
\mathcal{H}(m_1,m_2,k) \approx \frac{\mathcal{A}\mathcal{Z}+\mathcal{B}}{\mathcal{C}\mathcal{Z}+\mathcal{D}}.
\end{equation}
The variation in $\varDelta d(k)$ due to respiration causes $\mathcal{Z}$ to move on the unit circle in the complex plane.
Leveraging the fundamental property of the Mobius transformation that maps circles to circles, we can rewrite~\eqref{eq:mobius_form} as~\cite{zeng2019farsense}:
\begin{equation}
\label{eq:mobius_rewritten}
\frac{\mathcal{A}\mathcal{Z}+\mathcal{B}}{\mathcal{C}\mathcal{Z}+\mathcal{D}} = \frac{\mathcal{B}\mathcal{C}-\mathcal{A}\mathcal{D}}{\mathcal{C}^2}\cdot\frac{1}{\mathcal{Z}+\frac{\mathcal{D}}{\mathcal{C}}} + \frac{\mathcal{A}}{\mathcal{C}}.
\end{equation}
The proposed CSCR is decomposed into a static component $\mathcal{H}_{\text{S}}(m_1,m_2)$ and a dynamic component $\mathcal{H}_{\text{D}}(m_1,m_2,k)$, as expressed by $\mathcal{H}(m_1,m_2,k) \approx \mathcal{H}_{\text{S}}(m_1,m_2) + \mathcal{H}_{\text{D}}(m_1,m_2,k)$. The static component is the constant offset $\mathcal{H}_{\text{S}}(m_1,m_2) = \frac{\mathcal{A}}{\mathcal{C}}$. The dynamic component is given by $\mathcal{H}_{\text{D}}(m_1,m_2,k) = \frac{\mathcal{B}\mathcal{C}-\mathcal{A}\mathcal{D}}{\mathcal{C}^2}\cdot\frac{1}{\mathcal{Z}+\frac{\mathcal{D}}{\mathcal{C}}}$. 
The term $\frac{1}{\mathcal{Z}+\frac{\mathcal{D}}{\mathcal{C}}}$ (the inversion of the translated circle $\mathcal{Z}$) forms a new circle, which is then scaled and rotated by $\frac{\mathcal{B}\mathcal{C}-\mathcal{A}\mathcal{D}}{\mathcal{C}^2}$. 
Crucially, $\mathcal{H}_{\text{D}}(m_1,m_2,k)$ traces a circular arc in the complex plane, mirroring the behavior of the dynamic component $H_\text{D}(m_1,m_2,k)$ in the ideal CSI model~\eqref{eq:ideal_csi}.

To facilitate the analysis, we define a new Fresnel phase as $\rho_{\text{ratio}}(m_1,m_2,k) = \angle\mathcal{H}_{\text{S}}(m_1,m_2) - \angle\mathcal{H}_{\text{D}}(m_1,m_2,k)$. The relationship between the amplitude $|\mathcal{H}(m_1,m_2,k)|$ and phase $\vartheta(m_1,m_2,k)$ of the proposed CSCR is governed by $\rho_{\text{ratio}}(m_1,m_2,k)$.
The governing equations, as detailed in~\eqref{eq:mag_full} and~\eqref{eq:phase_full}, are structurally analogous to the ideal CSI relationships in \eqref{eq:csi_amplitude} and \eqref{eq:csi_phase_alternative}, confirming that the proposed CSCR restores signal complementarity under low-noise conditions.

\subsection{High-Noise Condition}
In the high-noise condition, $A_{\text{D}}(m,k) \not\gg \dot{\varepsilon}(m,k)$, but we still assume $A_{\text{S}}(m,k) \gg \dot{\varepsilon}(m,k)$. According to~\cite{li2021complexbeat}, the proposed CSCR reduces to:
\begin{align}
\label{eq:high_noise_ratio}
\MoveEqLeft \mathcal{H}(m_1,m_2,k)\nonumber & \\ \approx{} & \frac{H_{\text{S}}(m_1) + A_{\text{D}}(m_1,k)\text{e}^{-\frac{\text{j}2\pi d_{\text{D}}(k)}{\lambda_{m_1}}} + \dot{\varepsilon}(m_1,k)}{H_{\text{S}}(m_2)\text{e}^{\text{j}\Delta\theta(m_1,m_2)}}.
\end{align}
To mitigate the effect of the zero-mean Gaussian noise $\dot{\varepsilon}(m_1,k)$, we apply a moving average over $K_2$ consecutive samples. The resulting averaged CSCR, denoted by $\widehat{\mathcal{H}}(m_1,m_2,k)$, is obtained as:
\begin{equation}
\label{eq:high_noise_averaged}
\begin{split}
\widehat{\mathcal{H}}(&m_1,m_2,k)  \approx \sum_{i=1}^{K_2} \frac{\mathcal{H}(m_1,m_2,(k-1){K_2}+i)}{K_2} \\
& \approx \frac{H_{\text{S}}(m_1) + A_{\text{D}}(m_1,\tilde{k})\text{e}^{-\frac{\text{j}2\pi d_{\text{D}}(\tilde{k})}{\lambda_{m_1}}}}{H_{\text{S}}(m_2)} \text{e}^{-\text{j}\Delta\theta(m_1,m_2)}.
\end{split}
\end{equation}
Following the notation from the low-noise analysis, the averaged CSCR in \eqref{eq:high_noise_averaged} can be approximated as a linear function of $\mathcal{Z}$, expressed with $\mathcal{H}(m_1,m_2,k)$ by $\mathcal{H}(m_1,m_2,k) \approx \frac{\mathcal{A}\mathcal{Z}+\mathcal{B}}{\mathcal{D}}$.
The CSCR is accordingly decomposed into a static component $\mathcal{H}_{\text{S}}(m_1,m_2) = \frac{\mathcal{B}}{\mathcal{D}}$ and a dynamic component $\mathcal{H}_{\text{D}}(m_1,m_2,k) = \frac{\mathcal{A}}{\mathcal{D}}\mathcal{Z}$.
The rotation of $\mathcal{Z}$ on the unit circle causes the dynamic component $\mathcal{H}_{\text{D}}(m_1,m_2,k)$ to trace a corresponding circular path.
The resulting vector composition, a sum of a static vector and a rotating dynamic vector, is also identical to the structure of the ideal CSI model.
Consequently, a new Fresnel phase $\rho_{\text{ratio}}(m_1,m_2,k) = \angle\mathcal{H}_{\text{S}}(m_1,m_2) - \angle\mathcal{H}_{\text{D}}(m_1,m_2,k)$ can be defined, and the relationships in \eqref{eq:mag_full} and \eqref{eq:phase_full} still hold. As a result, the proposed CSCR successfully maintains complementarity even under high-noise conditions.

In summary, the proposed CSCR successfully restores the complementarity of the respiratory signal under both low-noise and high-noise conditions, which completes the proof of \textbf{Proposition}~\ref{thm:blind_spot_free}.
\vphantom{}

\end{CJK}

\bibliographystyle{IEEEtran}
\bibliography{IEEEabrv,myrefs}

\end{document}